\pgfplotsset{compat=1.6}
\newtheorem{theorem}{Theorem}
\newtheorem{definition}{Definition}
\newtheorem{lemma}{Lemma}
\newtheorem{example}{Example}
\newtheorem{corollary}[theorem]{Corollary}
\newtheorem{remark}{Remark}
\newcommand{\ceil}[1]{\lceil #1 \rceil}
\newcommand{\floor}[1]{\lfloor #1 \rfloor}
\newcommand{\TRANSACT}[3] {\ensuremath{({#1} \to_{#2} {#3})}} 
\newcommand{\etal}{{\em et al.}}
\definecolor{myblue}{RGB}{80,80,160}
\definecolor{mygreen}{RGB}{80,160,80}
\begin{document}
\bstctlcite{IEEEexample:BSTcontrol} 

\title{On Ranges and Partitions in Optimal {TCAMs}}

\author{
\IEEEauthorblockN{Yaniv Sadeh}
\IEEEauthorblockA{Tel-Aviv University, Israel \\ yanivsadeh@mail.tau.ac.il}
}

\maketitle

\begin{abstract}
Traffic splitting is a required functionality in networks, for example for load balancing over paths or servers, or by the source's access restrictions. The capacities of the servers (or the number of users with particular access restrictions) determine the sizes of the parts into which traffic should be split. A recent approach implements traffic splitting within the ternary content addressable memory (TCAM), which is often available in switches. It is important to reduce the amount of memory allocated for this task since TCAMs are power consuming and are often also required for other tasks such as classification and routing. In the longest-prefix model (LPM), \cite{DravesKSZ99} finds a minimal representation of a function, and \cite{BitMatcher} finds a minimal representation of a partition.
In certain situations, range-functions are of special interest, that is, all the addresses with the same target, or action, are consecutive. In this paper we show that minimizing the amount of TCAM entries to represent a partition comes at the cost of fragmentation, such that for some partitions some actions must have multiple ranges. Then, we also study the case where each target must have a single segment of addresses.
\end{abstract}

\section{Introduction}
\label{section_indtroduction}
In many networking applications, traffic has to be split into multiple possible targets. For example, this is required in order to partition traffic among multiple paths to a destination based on link capacities, and when sending traffic to one of multiple servers proportionally to their CPU or memory resources for load balancing. Traffic splitting also arises in maintaining access-control lists (ACLs). Here, we want to limit the number of users with specific permissions. We do this by associating a fixed quota of $W$-bit identifiers with each ACL, and granting a particular access only to users that have one of these identifiers.
In general, we address any scenario where traffic should be split by allocating a particular subset of identifiers of a specified size to each part (a part could be associated with a server or an ACL or with some other object).

It is increasingly common to rely on network switches  to perform the split~\cite{al2008scalable},\cite{Ananta},\cite{DASH_alg}. Equal cost multipath routing (ECMP)~\cite{RFC2992} uses hashing on flows to uniformly select one of target values written as memory entries. 
WCMP~\cite{WCMP},\cite{Zegura} (Weighted ECMP) generalizes the selection for non-uniform selections through entry repetitions, implying a distribution according to the number of appearances of each possible target. The implementation of some distributions in WCMP may require a large hash table. While for instance implementing a 1:2 ratio can be done with three entries (one for the first target and two for the second), the implementation of a  ratio  of the form $1:2^W-1$ is expensive, requiring $2^W$ entries. Memory can grow quickly for particular distributions over many targets, even if they are only being approximated. A recent approach~\cite{DASH_alg} refrains from memory blowup by comparing the hash to range-boundaries. Since the hash is tested sequentially against each range, it restricts the total number of load-balancing targets.

Recently, a natural approach was taken to implement traffic splitting within the Ternary Content Addressable Memory (TCAM), available in commodity switch architectures. For some partitions this allows a much cheaper representation~\cite{WangBR11},\cite{Niagara},\cite{AccurateExp},\cite{BitMatcher}. In particular, a partition of the form $1:2^W-1$ can be implemented with only two entries. Unfortunately, TCAMs are power consuming and thus are of limited size~\cite{appelman2012performance},\cite{McKeownABPPRST08}. Therefore a common goal is to minimize the representation of a partition in TCAMs. Finding a representation of a partition becomes more difficult when the number of possible targets is large. Focusing on the \emph{Longest Prefix Match model} (LPM), \cite{Niagara} suggested an algorithm named \emph{Niagara}, and showed that it produces small representations in practice.
They also evaluated a tradeoff of reduced accuracy for less rules. \cite{BitMatcher} suggested an optimal algorithm named \emph{Bit Matcher} that computes a smallest TCAM for a target partition. They also proved that Niagara always computes a smallest TCAM, explaining its good empirical performance. The resulting size of the TCAM table is analyzed in \cite{SadehSOSRTCAMsize} and \cite{SadehISITTCAMsize}: \cite{SadehSOSRTCAMsize} provide a few lower and upper bounds, and \cite{SadehISITTCAMsize} focuses on average-case analysis when partitions are sampled uniformly from a certain distribution. A representation of a partition can be memory intensive when the number of possible targets is large. \cite{TCAM_Linf_TON} and \cite{TCAM_L1_INFOCOM} consider ways of finding approximate-partitions whose representation is cheaper than a desired input partition.

In this work we study the tradeoff between representing a partition with a smallest number of TCAM rules, versus allocating this partition by minimizing the number of ranges for each of the targets. Rottenstreich~\etal~\cite{AccurateExp} proved that any partition into $k=2$ parts can be implemented optimally on a TCAM such that a segment of addresses is assigned to each target. That is, for $P=[p,2^W-p]$ there is an optimal allocation such that the first $p$ addresses belong to the first target, and the last $2^W-p$ belong to the second.

{\bf Our Contributions.}
We show that in general the two objectives, of minimizing the number of rules and allocating in ranges, collide. More formally:

(1) We show that for $k \ge 3$ such an optimal allocation may not exist. Section~\ref{section_segments_simple} introduces an algorithm for constructing a smallest TCAM that assigns a single segment of addresses to each target, according to their order in the input. That is, the first $p_1$ addresses are allocated to the first target, the following $p_2$ addresses are allocated to the second target and so on. Then we use this algorithm and the Bit Matcher algorithm to show that the partition $P = [13,13,6]$ does not have a minimal representation such that a single segment is associated with each target. This partition is minimal in $W$ and $k$ among all partitions with this property. This implies that for any $k \ge 3$ such partitions exist. Indeed, we show how to ``embed'' this problematic instance inside partitions with more parts such that not all parts can be ranges.

(2) In Section~\ref{section_segments_generalized} we generalize the analysis to show that even if we allow the addresses of each target to be fragmented into at most $M < \frac{c+1}{4} + \frac{1}{2c}$ segments, then for any $k \ge c$ there are partitions such that addresses allocated to some parts must be fragmented into more than $M$ segments if realized by a minimal set of TCAM rules.

We note that while we prove the existence (constructively) of ``very bad'' partitions, our construction requires $W$ to be very large. Because of this, and based on additional exhaustive computer-search on small values of $k$ and $W$, we believe that in practice most partitions can be realized with minimal number of rules while mapping a small number of segments to each target.

(3) In Section~\ref{section_looking_for_best_order_of_segments} we focus on the case where a partition must be realized as a single segment of addresses per target, and show that an arbitrary order requires no more than $k$ times the number of rules compared to the best ordering. We also prove that by slightly optimizing our choice, we can reduce this factor to at most $\min(\frac{k+1}{3},W-\floor{\lg k} + 1)$.

The structure of the rest of the paper is as follows. In Section~\ref{section_model} we formally define the problem and set some terminology and definitions. In Section~\ref{section_segments_simple} we show an example to a partition  with $k=3$ parts that cannot be realized both with the minimum number of rules and as three segments of consecutive addresses. In Section~\ref{section_segments_generalized} we dive deeper and show the more general result on the tradeoff between minimizing the number of rules and segmentation. In Section~\ref{section_looking_for_best_order_of_segments} we revisit the question of finding the best order of segments that minimize the required number of rules, given this restriction. Section~\ref{section_conclusions} summarizes our work.

\section{Traffic Splitting Problem}
\label{section_model}

A Ternary Content Addressable Memory (TCAM) of width $W$ is a table of entries, or \emph{rules}, each containing a \emph{pattern} and a \emph{target}. We assume that each target is an integer in $\{1,\ldots,k\}$, and also define a special target $0$ for dealing with addresses that are not matched by any rule. Each pattern is of length $W$ and consists of bits (0 or 1) and don't-cares ($*$). An \emph{address} is said to match a pattern if all of the specified bits of the pattern (ignoring don't-cares) agree with the corresponding bits of the address. If several rules fit an address, the first rule applies. An address $v$ is associated with the target of the rule that applies to $v$.

The analysis in this paper follows the \emph{Longest Prefix Match (LPM)} model, restricting rule-patterns in the TCAM to include wildcards only as a suffix such that a pattern can be described by \emph{a prefix} of bits. 
This model is motivated by specialized hardware as in \cite{LPM_TCAM}, and is assumed in many previous studies \cite{WangBR11},\cite{Niagara},\cite{AccurateExp},\cite{BitMatcher}. Common programmable switch architectures such as RMT and Intel’s FlexPipe have tables  dedicated to LPM~\cite{NSDIJose15},\cite{Forwarding13},\cite{FlexPipe}. In general, much less is known about general TCAM rules.

There can be multiple ways to represent the same partition in a TCAM, as a partition does not restrict the particular addresses mapped to each target but only their number. For instance, with $W=3$ the rules $\{ \textsc{011} \to 1, \textsc{01*} \to 2, \textsc{0**} \to 3, \textsc{***} \to 1\}$ imply a partition [5,1,2]  of addresses mapped to each of the targets \{1,2,3\}. Similarly, the same partition can also be derived using only three rules $\{\textsc{000} \to 2, \textsc{01*} \to 3, \textsc{***} \to 1\}$ (although this changes the identity of the addresses mapped to each target).

Given a desired partition $P = [p_1,\ldots,p_k]$ of the whole address space of $2^W$ addresses of $W$-bits such that $p_i > 0$ addresses should reach target $i$ ($\sum_i p_i = 2^W$), we aim to compute a smallest set of TCAM rules that partition traffic according to $P$. Note that $k \le 2^W$, and all addresses are considered equal in this model.\footnote{The model assumes implicitly that every address is equally likely to arrive, therefore the TCAM implementation only requires each target to receive a certain number of addresses. This assumption might not hold in practice, but it can be mitigated by ignoring bits which are mostly fixed like subnet masks etc. For example, \cite{Kang2014NiagaraSL} analyzed traces of real-data and concluded that for those traces about $6{-}8$ bits out of the client's IPv4 address are ``practically uniform''.}

A TCAM $T$ can be identified with a sequence $s$ of transactions between targets, defined as follows. Start with an empty sequence, and consider the change in the mapping defined by $T$ when we delete the first rule of $T$, with target $i \in \{1,\ldots,k\}$. Following this deletion some of the addresses may change their mapping to a different target, or become unallocated. If by deleting this rule, $m$ addresses are re-mapped from $i$ to $j \in \{0,\ldots,k\}$ (recall that $j=0$ means unallocated), we add to $s$ a transaction $\TRANSACT{i}{m}{j}$. We then delete the next rule of $T$ and add the corresponding transactions to $s$, and continue until $T$ is empty and all addresses are unallocated.

\begin{definition}[Transactions] 
\label{def_transaction}
Denote a transaction of size $m$ from $p_i$ to $p_j$ by $\TRANSACT{i}{m}{j}$. Applying this transaction to a partition $P = [p_1,\ldots,p_k]$ updates its values as: $p_i \leftarrow p_i - m$, $p_j \leftarrow p_j + m$.
We refer to $i$ as the \emph{sender} and to $j$ as the \emph{receiver} (of the transaction).
\end{definition}

In the LPM model, a deletion of a single TCAM rule corresponds to exactly one transaction (or none if the rule was redundant), of size that is a power of $2$. 

\begin{example}
\label{example_tcam_to_sequence}
Consider the rules: $\{011 \to 1, 01{*} \to 2, 0{*}{*} \to 3, {*}{*}{*} \to 1 \}$ with $W=3$. They partition $2^W = 8$ addresses to $k=3$ targets. Deleting the first rule corresponds to the transaction $\TRANSACT{1}{1}{2}$. The deletion of each of the following three rules also corresponds  to a single transaction, $\TRANSACT{2}{2}{3}$, $\TRANSACT{3}{4}{1}$ and $\TRANSACT{1}{8}{0}$, respectively, see Fig.~\ref{figure_mapping_example}.

\begin{figure}[t]
  \centering
    \includegraphics[width=0.99\linewidth]{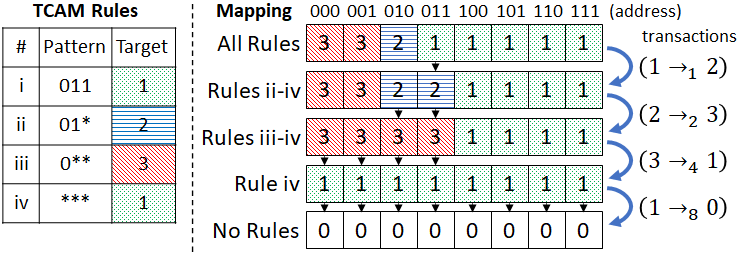}
  \caption{Example~\ref{example_tcam_to_sequence}: Rules and corresponding transactions due to remapping.}
  \label{figure_mapping_example}
\end{figure}
\end{example}

\begin{definition}[Complexity] 
\label{definition_length_of_partition}
Let $P=[p_1,\ldots,p_k]$ be a partition of $2^W$. We define $\lambda(P)$ to be the length of a shortest sequence of transactions of sizes that are powers of $2$, that zeroes $P$. This is also equal to the size of a smallest LPM TCAM that realizes $P$.
We say that $\lambda(P)$ is the \emph{complexity} of $P$. 
\end{definition}

\looseness=-1
It was shown in~\cite{BitMatcher} that Bit Matcher (see Algorithm~\ref{alg_bit_matcher})
computes a (shortest) sequence of transactions for an input partition $P$ whose length is $\lambda(P)$ which can also be mapped to a TCAM table. No smaller TCAM exists, as it will correspond to a shorter sequence in contradiction to the minimality of $\lambda(P)$.

\begin{algorithm}[t]
    \SetAlgoLined
    \DontPrintSemicolon

    \SetKwFunction{funcGenerateSequence}{Compute}
    \SetKwFunction{funcProcessLowLevel}{ProcessLevel}
    \SetKwProg{Fn}{Function}{:}{}
    
    { 
    \Fn{\funcGenerateSequence{partition $P$, $\sum_{i=1}^k p_i =2^W$}}{
        Initialize $s$ to be an empty sequence.\;
        \For{level $d = 0 \ldots W-1$} {
            $s' = \funcProcessLowLevel(P,d)$. Update $s \leftarrow s \cup s'$ and apply $s'$ on $P$.
        }
        \Return $s \cup \{ \TRANSACT{i}{2^W}{0} \ |\ 1 \le i \le k \wedge p_i = 2^W \}$.
        }
    \;
    \Fn{\funcProcessLowLevel{partition $P$, $d$}}{
        {\footnotesize // $p_i$ is \emph{bit lexicographic} smaller than $p_j$ if ${p_i}^r < {p_j}^r$ where $n^r$ is the bit-reverse of $n$ with respect to word size $W$.}
    
       Let $A = \{ i \mid i \ge 1 \wedge p_i[d] = 1\}$.\;
       
       Let $A_h \subset A$ consists of the $|A|/2$ indices of the $p_i$s that are largest in bit lexicographic order, and let $A_l = A \setminus A_h$. Pair the elements of $A_h$ and $A_l$ arbitrarily. For each pair $i \in A_l,j \in A_h$ append to $s'$ (initially $s' = \emptyset$) the transaction $\TRANSACT{i}{2^d}{j}$.\;
       Finally, \Return $s'$.
    }
    }
    \caption{Bit Matcher Algorithm~\cite{BitMatcher}}
    \label{alg_bit_matcher}
\end{algorithm}

To conclude this section, we introduce the ``visual terminology'' which will be used in the following section. A set of prefix rules $T$ (of an LPM TCAM) corresponds to a subset of the nodes of the full binary trie (see Fig.~\ref{figure_coloring_of_trie}). In particular, the match-all prefix corresponds to the root, and any other nonempty prefix $p$ corresponds to the node whose path from the root gives $p$ if we change an edge to a left child to $0$ and an edge to a right child to $1$. The rule which applies to an address $v$ (which is a leaf) corresponds to the closest ancestor of $v$ which represents a prefix in $T$. This is the longest prefix of the address $v$ in $T$. Henceforth in this paper we shall adopt this way to regard TCAMs. We will also identify each target with a color, and say that the nodes of the tree are colored accordingly: We color a node $v$ by the color of the target of the rule of the closest marked ancestor of $v$, and we define a conflict of colors as follows.

\begin{definition}[Trie Coloring Conflict]
\label{definition_color_conflict}
Let $C$ be some coloring of a trie. We say that a node $v$ is in \textbf{conflict} if the color of $v$ is different than the color of its parent, or if $v$ is the root. We also associate a non-root conflict with the edge between $v$ and its parent.
\end{definition}

A TCAM of $k$ targets defines a coloring of the trie with $k$ colors such that the number of rules equals the number of conflicts and vice versa. The size of each part in the partition induced by this TCAM is equal to the number of leaves of the corresponding color, i.e.\ $p_i$ leaves are colored by the $i$th color. Fig.~\ref{figure_coloring_of_trie} presents an example of a coloring of a trie and its associated conflicts. The smallest TCAM representing a partition $P$ corresponds to a coloring of the trie with the smallest number of conflicts.

\begin{figure}[t]
  \centering
  \includegraphics[width=0.9\linewidth]{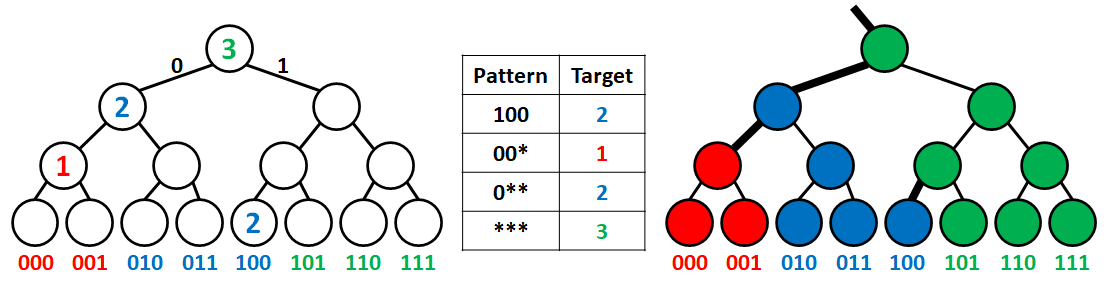}
  \caption{Coloring a trie is equivalent to marking its nodes which are associated with the TCAM rules. In the middle there is a TCAM table. On the left, a representation by a trie with marked nodes, and on the right a representation by a colored trie. Fat edges emphasize conflicts of colors, and the root also has a fictitious parent-edge in the figure to highlight its own conflict.}
  \label{figure_coloring_of_trie}
\end{figure}

\begin{remark}
\label{remark_transactions_in_coloring}
Performing a transaction is interpreted as unmarking a node in the tree. Furthermore, if we perform the transactions in bottom-up order (e.g. by increasing sizes) then by definition we always unmark a node without marked descendants. In coloring terminology, a transaction colors the subtree of a node in conflict by the color of its parent.
This subtree is monochromatic when we recolor it, if we perform transactions in bottom-up order.
\end{remark}

\section{Segments Allocation on a TCAM and an Impossibility-Result}
\label{section_segments_simple}

\subsection{Computing Rules for Segmented Allocation}

Given a partition of the leaves of the trie into $k$ segments of sizes $p_1,\ldots,p_k$, Algorithm~\ref{alg_consecutive_ranges} colors the trie with minimum conflicts such that each of these segments has a different color. We refer to such a coloring as an {\em optimal coloring.} We note in advance that this algorithm can be viewed as a special-case of the more general and optimal algorithm of Draves~\etal~\cite{DravesKSZ99}, called ORTC, which finds a coloring with minimum number of conflicts (corresponds to a TCAM of minimal size) for any given coloring of the leaves. We elaborate on this later and explain how exactly ORTC generalizes our algorithm.

The core idea of Algorithm~\ref{alg_consecutive_ranges} is that as long as $2$ sibling nodes have the same color, we can color their parent by the same color without conflicts. Only when the siblings are colored differently we must decide which of them will conflict with their parent. To do it properly, we only require local information of four nodes sharing a grandparent.

\begin{algorithm}[t]
    \SetAlgoLined
    \DontPrintSemicolon
    \KwIn{
        A coloring $\chi$ of the leaves of a trie of depth $W$, in $k$ consecutive segments of lengths $x_1,\ldots,x_k$ (such that $\sum_{i=1}^{k}{x_i = 2^W}$).
    }
    \KwOut{
        Trie coloring $\chi^*$ that extends $\chi$, with a minimum number of conflicts.
    }
    \hspace{2mm}

    Initialization: Color the leaves according to $\chi$.
    
    \For{trie depth = $W, \ldots, 2$} {
        For each 4-cousin nodes (same grandparent), whose colors in order are $a,b,c,d$:\\
        \ \ If $b=c$, color the parent of $a$ and $b$ by $b$, otherwise color it by $a$. Similarly, \\
        \ \ if $b=c$, color the parent of $c$ and $d$ by $b$, otherwise color it by $d$.
    }
    
    Color the root by the color of its right child.
    
    \caption{Minimum-conflict trie coloring for single-segment leaves coloring.}
    \label{alg_consecutive_ranges}
\end{algorithm}

\begin{theorem}[Time and space complexity of Algorithm~\ref{alg_consecutive_ranges}]
\label{theorem_runtime_consecutive_ranges}
Let $N$ be the number of conflicts in the coloring produced by Algorithm~\ref{alg_consecutive_ranges}. Then Algorithm~\ref{alg_consecutive_ranges} requires $O(k + N)$ space, and takes $O(Wk)$ time assuming $W$-bit word operations take $O(1)$ time.
\end{theorem}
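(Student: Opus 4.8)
The naive reading of Algorithm~\ref{alg_consecutive_ranges} keeps one color per trie node and therefore uses $\Theta(2^W)$ time and space; the plan is to give an equivalent \emph{run-compressed} implementation and analyze that. At any depth $d$, the colors of the $2^d$ nodes form a step function on $\{0,\ldots,2^d-1\}$, and I would store only its list of maximal monochromatic runs, each as a (color, starting position) pair (equivalently, a (color, length) pair), kept in left-to-right order. The input partition gives the leaf level directly as exactly $k$ runs whose consecutive colors differ, set up in $O(k)$ time and space.

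The technical heart is a \textbf{run bound}: at every depth the run list has at most $k$ entries. I would prove this by downward induction using the $4$-cousin update, which replaces a local pattern of colors $(a,b,c,d)$ by $(b,b)$ when $b=c$ and by $(a,d)$ when $b\ne c$. In either case the multiset of colors appearing in the output is contained in that of the input, the left-to-right order is preserved, and within each group the distinct-color pattern of the output is a subsequence (up to coalescing equal neighbors) of $(a,b,c,d)$; concatenating over the groups — where runs may merge across a group boundary but never split — shows that the depth-$(d-1)$ run-color sequence is a subsequence, up to coalescing, of the depth-$d$ one. Since the leaf level has $k$ runs, every depth has at most $k$. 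I expect this lemma to be the main obstacle, the delicate point being to check exhaustively that the update never manufactures a color boundary absent one level below.

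Granting the run bound, each remaining step costs $O(k)$ time and $O(k)$ working space. For the transition from depth $d$ to depth $d-1$: the color of a depth-$(d-1)$ node $q$ is determined by the depth-$d$ colors in a window of at most four positions around $2q$, so the output step function can change value only near the images of the at most $k-1$ breakpoints of the input step function; one sweep over the $\le k$ runs and $\le k-1$ inter-run boundaries, doing $O(1)$ arithmetic on $W$-bit positions per event, produces the depth-$(d-1)$ list, followed by one pass coalescing equal adjacent runs. For the conflicts at depth $d$: the case analysis of the $4$-cousin rule shows that in each group at most two of its four nodes differ in color from their parent, and each such node straddles a depth-$d$ run boundary, so there are at most $2(k-1)$ conflicts at that depth; they are identified by one more sweep and emitted as marked trie nodes (TCAM rules), which together determine $\chi^*$. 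Summing over the $W-1$ merge steps, the root step, and the $O(k)$ initialization yields $O(Wk)$ time; moreover the per-depth bound gives $N\le 2W(k-1)+1=O(Wk)$, so emitting the conflicts does not dominate the running time. For space, the working storage is two consecutive run lists, $O(k)$, plus the $N$ emitted conflicts, $O(N)$, for $O(k+N)$ overall.

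Finally, correctness of the compressed implementation — that it colors exactly the nodes Algorithm~\ref{alg_consecutive_ranges} would — is immediate, since at every depth it maintains precisely the same colors, only compressed; the optimality of the coloring itself is inherited from Algorithm~\ref{alg_consecutive_ranges} (and, as remarked, from ORTC~\cite{DravesKSZ99}).
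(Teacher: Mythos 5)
Your proposal is correct and is essentially the paper's argument, only spelled out in more detail: the paper's proof likewise implements Algorithm~2 by tracking the monochromatic segments at each level, checking 4-cousins only along segment boundaries, and invoking the inductive fact that each level has $O(k)$ segments, giving $O(k)$ time per level over $W-1$ levels and $O(k)$ working space plus $O(N)$ for the recorded conflicts. Your run-bound lemma (parent colors are a subsequence of child colors, so runs never increase) is precisely the induction the paper leaves to the reader.
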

\begin{proof}
It is possible to implement Algorithm~\ref{alg_consecutive_ranges} such that it keeps track of the monochromatic segments at each level, and only checks 4-cousins along segment boundaries. Since there are $O(k)$ segments at each level (this can be verified by induction) we need $O(k)$ space for processing the current level, and $O(N)$ additional space to record the conflicts. The running time is $O(k)$ per level to map the segments at the current level to new segments in the next level, and we iterate over $W-1$ levels.
\end{proof}

The following theorem provides an upper bound on the number of conflicts (size of the TCAM) generated by Algorithm~\ref{alg_consecutive_ranges}. Unfortunately, it is not much stronger than the naive upper-bound of $Wk$.

\begin{theorem}[Number of conflicts of Algorithm~\ref{alg_consecutive_ranges}]
\label{theorem_table_size_consecutive_ranges}
Let $N$ be the number of conflicts generated by Algorithm~\ref{alg_consecutive_ranges}. Then $N \le (W-\floor{\lg k} + 1) (k-1) + 1$.
\end{theorem}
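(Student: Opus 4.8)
The plan is to bound the number of conflicts level by level, exploiting the structure of how monochromatic segments merge as the algorithm sweeps from the leaves up to the root. First I would set up notation: for $0 \le d \le W$ let $S_d$ denote the number of maximal monochromatic segments in the coloring of the $2^d$ nodes at depth $d$ after the algorithm has processed that level (with $S_W = k$ as the input has $k$ segments). The key observation from the algorithm's update rule is that when four cousins $a,b,c,d$ are collapsed into two parents, the number of segment-boundaries among the parents is at most the number among the four leaves; more carefully, each level can only have \emph{fewer} boundaries, so $S_{d-1} \le S_d$, and in fact the algorithm is designed so that $S_{d-1} \le \lceil S_d / 1 \rceil$ in the worst case but typically shrinks. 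The conflicts generated at the edges entering depth $d$ are exactly $S_d - 1$ (boundaries between the $S_d$ segments) — or rather, I need to be careful: a conflict on an edge from depth $d-1$ to depth $d$ occurs when a node at depth $d$ differs in color from its parent, so the number of such conflicts is at most $S_d$ (each segment at depth $d$ contributes at most one conflict at its left end where it meets a differently-colored parent-region), and summing over all levels plus the root conflict gives $N \le 1 + \sum_{d=1}^{W} (\text{conflicts at level } d)$.

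The heart of the argument is a two-part bound on $S_d$. On one hand, $S_d \le k$ always, since merging never creates new colors and the number of distinct maximal runs of a fixed color-sequence can only decrease under the pairing operation — I would prove $S_{d-1} \le S_d$ by a short case analysis on the four-cousin rule (if $b = c$ the middle boundary vanishes; the outer structure is preserved). On the other hand, near the top of the trie $S_d \le 2^d$ trivially since there are only $2^d$ nodes. The crossover happens around $d = \lfloor \lg k \rfloor$: for $d \le \lfloor \lg k \rfloor$ use $S_d \le 2^d$, and for $d > \lfloor \lg k \rfloor$ use $S_d \le k$. Actually the cleaner accounting is to count conflicts: at each of the top $\lfloor \lg k \rfloor$ levels (depths $1, \ldots, \lfloor \lg k \rfloor$) the number of conflicts is at most $2^{d} - 1 \le$ (something summing to $k-1$ total, since $\sum_{d=1}^{\lfloor \lg k\rfloor}(\text{boundaries}) $ telescopes against the final $k$ segments), and at each of the remaining $W - \lfloor \lg k \rfloor$ levels the number of conflicts is at most $k - 1$. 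Combining, $N \le (k-1) + (W - \lfloor \lg k \rfloor)(k-1) + 1 = (W - \lfloor \lg k \rfloor + 1)(k-1) + 1$, which is the claimed bound. The $+1$ accounts for the root's own conflict (Definition~\ref{definition_color_conflict}).

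The main obstacle I expect is making precise the claim that the top $\lfloor \lg k \rfloor$ levels \emph{collectively} contribute only about $k-1$ conflicts rather than $k-1$ \emph{each}. The point is that in a tree of depth $\lfloor \lg k \rfloor$ with at most $k$ leaves' worth of "incoming segments" at the bottom, the total number of internal edges that can be conflicts is bounded by the number of leaves of that top subtree, hence $\le k - 1$ non-root edges plus the structure forced by having at most $\min(2^d, k)$ segments at depth $d$. I would formalize this by observing that a conflict edge at depth $d$ corresponds to a segment boundary "surviving" to depth $d-1$, and each of the $k$ original segments can be charged for at most one such surviving boundary across the entire top portion — equivalently, the conflicts in the subtrie of depth $\lfloor \lg k \rfloor$ rooted at each depth-$(W-\lfloor\lg k\rfloor)$... hmm, more simply: the conflicts strictly above depth $W - \lfloor \lg k \rfloor$ number at most (number of segments entering that region) $-1 \le k - 1$, because within a contiguous region the conflict-count of an optimal-ish coloring of a depth-$h$ trie whose leaves form $m$ segments is at most $m + h - 1$ by a trivial induction, and here we instead bound it by $m - 1$ plus a per-level $2^d$ term that sums appropriately. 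I would do the careful induction here as the one genuinely technical lemma, then assemble the final inequality by arithmetic.
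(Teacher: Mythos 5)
Your proposal is correct and follows essentially the same route as the paper's proof: conflicts arise only at bichromatic sibling pairs, of which there are at most $k-1$ per level in the lower $W-\floor{\lg k}$ levels because the coloring remains segmented with at most $k$ segments, while the top $\floor{\lg k}$ levels (plus the root) contribute at most about $k$ conflicts in total since they contain at most $k$ nodes. Your $S_d$ bookkeeping and the monotone-merging observation are just a more explicit version of the paper's one-line claim that the coloring ``begins as segments and continues this way,'' and your final assembly $(W-\floor{\lg k})(k-1)+(k-1)+1$ matches the paper's count exactly.
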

\begin{proof}
A conflict occurs only when two siblings have different colors, otherwise their parent has the same color as both of them. Since the coloring begins as segments and continues this way, at most $k-1$ conflicts can happen at each of the first $W-\floor{\lg k}$ levels. The top $\floor{\lg k}$ levels of the trie contain at most $k$ nodes, so from level $\floor{\lg k}$ and upwards to the root (including the root) at most $k$ additional conflicts can be introduced. In total we get at most $(W-\floor{\lg k} + 1) (k-1) + 1$ conflicts. 
\end{proof}

\begin{theorem}[Algorithm~\ref{alg_consecutive_ranges} Optimality]
\label{theorem_segments_optimality_algorithm}
Algorithm \ref{alg_consecutive_ranges} generates a trie coloring $\chi^*$ that extends the input coloring $\chi$ with a minimum number of conflicts.
\end{theorem}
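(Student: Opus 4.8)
The plan is to prove optimality by exhibiting Algorithm~\ref{alg_consecutive_ranges} as a specialization of the ORTC procedure of Draves~\etal~\cite{DravesKSZ99}, which is already known to compute a minimum-conflict extension of an arbitrary leaf coloring. Since the output of Algorithm~\ref{alg_consecutive_ranges} manifestly extends $\chi$ (the leaves are never recolored), it suffices to show that it produces a coloring with the same number of conflicts as some run of ORTC. Recall that ORTC first sweeps bottom-up, attaching to each node a \emph{set} of colors (a leaf gets the singleton of its own color; an internal node gets the intersection of its two children's sets when nonempty, and their union otherwise), and then sweeps top-down, fixing one color per node from its set and letting a node inherit its parent's color whenever that color lies in its own set. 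I would prove, by bottom-up induction over the levels, two statements: (i) the color Algorithm~\ref{alg_consecutive_ranges} commits to at a node always lies in that node's ORTC set; and (ii) a conflict is recorded on an edge exactly when a top-down run of ORTC, breaking ties consistently with the algorithm's choices, would place a rule there. Together, (i) and (ii) give that the number of conflicts equals ORTC's (minimum) rule count.

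The engine of the induction is the observation that, because $\chi$ consists of $k$ \emph{consecutive} segments, the colors appearing left-to-right among the nodes at any fixed level are non-decreasing; in particular the coloring the algorithm induces after processing the bottom level is again a (coarser) segment coloring on the depth-$(W-1)$ trie, so the algorithm on the remaining levels is just the same algorithm on a smaller instance. Monotonicity is also what makes the 4-cousin rule sufficient: for the left child $P_1$ of a grandparent, whose children carry algorithm-colors $a$ (left) and $b$ (right), and its sibling $P_2$ with children of colors $c$ (left) and $d$ (right), the segment order forces $a\le b\le c\le d$; when the inner pair agrees ($b=c$) the children's ORTC sets must overlap, the common value $b$ is simultaneously a legal representative of both $P_1$ and $P_2$, and it creates no conflict between them, whereas when $b\neq c$ the inner subtrees are far apart in segment order, the ORTC sets are disjoint so one conflict is unavoidable on each side, and committing to the outer values $a$ for $P_1$ and $d$ for $P_2$ both stays inside the respective ORTC set and preserves the possibility of a free merge one level higher. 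The remaining pieces are routine: the top $\floor{\lg k}$ levels carry at most $k$ further conflicts, as in the proof of Theorem~\ref{theorem_table_size_consecutive_ranges}, and the closing line that colors the root by its right child is optimal because the root is always in conflict and its ORTC set contains its right child's committed color.

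The step I expect to be the real obstacle is (i): proving that the color the algorithm keeps at a node genuinely belongs to that node's ORTC set using only the 4-cousin window, even though the ORTC set itself depends on the entire subtree. The subtlety is that an ORTC set need not be an interval of segment indices, so one cannot simply track an endpoint; what must be shown is that the \emph{relevant} features --- whether it meets the sibling's set, and which of its colors the algorithm's choice lands on --- are functions of the algorithm-colors of the four grandchildren once monotonicity of the level is taken into account. I would discharge this by carrying a strengthened inductive hypothesis that pins down, for each node, exactly which element of its ORTC set the algorithm selects and how it sits relative to the node's leftmost and rightmost leaf colors, and then verifying the two branches of the rule against the ORTC recursion. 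As a self-contained alternative that avoids ORTC entirely, one can induct on $W$ directly: every mixed sibling pair at the bottom level forces at least one conflict there, so the algorithm is optimal on that level; the induced depth-$(W-1)$ coloring is again a segment coloring, on which the inductive hypothesis applies; and a short local exchange argument shows that the algorithm's specific choice of representative for each mixed pair never precludes an optimal completion of the levels above.
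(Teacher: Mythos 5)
Your proposal follows essentially the same route as the paper: Theorem~\ref{theorem_segments_optimality_algorithm} is proved there precisely by exhibiting Algorithm~\ref{alg_consecutive_ranges} as a special case of ORTC~\cite{DravesKSZ99}, whose minimality is imported, and by matching the $b=c$ / $b\neq c$ branches of the 4-cousin rule against ORTC's bottom-up candidate sets and top-down tie-breaking. You are in fact more explicit than the paper about the one delicate point --- that the color the algorithm commits at an internal node lies in that node's ORTC candidate set, which the paper argues only informally through the four-grandchildren picture (treating the children's sets as singletons) --- so your strengthened induction using the segment-monotonicity of each level (or your alternative direct induction on $W$) would supply detail the paper leaves implicit rather than a different argument.
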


To prove Theorem~\ref{theorem_segments_optimality_algorithm} it suffices to argue why Algorithm~\ref{alg_consecutive_ranges} is a special case of a more general algorithm called ORTC \cite{DravesKSZ99} that works for any input coloring of the leaves, not necessarily of $k$ consecutive ranges. That is, ORTC works for an arbitrary mapping from addresses (leaves of the trie) to targets (the colors).

The ORTC algorithm can be thought of as having two main phases: a bottom-up phase in which each node receives information regarding its subtree to know its descendants' colors, and then a top-down phase which can be thought of as making the necessary tie-breaking by coloring of the parent. Our special case does not require the top down phase.

Concretely, ORTC starts at the leaves, and aggregates upward the candidate color sets $C_u$ of each node $u$ based on its children $v,w$ as follows. If $C_v \cap C_w = \emptyset$ then $C_u = C_v \cup C_w$, otherwise $C_u = C_v \cap C_w$. 
This bottom-up process starts by assigning $C_v$ to be a singleton set containing the color of $v$ for each leaf $v$. Then, a second pass on the trie from top to bottom picks for each node $v$ the color of its parent if this color is in $C_v$ and an arbitrary color from $C_v$ if it is not.

For example, in our simplified case, if we start with $4$ leaves whose color-sets are $\{a\},\{b\},\{b\},\{d\}$, then the color-sets of the parents will be $\{a,b\}$ and $\{b,d\}$ (union), respectively. The color-set of the grandparent will be $\{b\}$ (intersection), which will then propagate downwards in the second phase to color the parents with $b$, as in our algorithm. Our algorithm may produce a different output than ORTC since ORTC chooses an arbitrary color for a node if its list doesn't contain its parent color, while Algorithm~\ref{alg_consecutive_ranges} is stated more strictly. For example, in a trie with four leaves, each of a different color, ORTC can assign any color to the root, while Algorithm~\ref{alg_consecutive_ranges} will color it like the right-most leaf (see Fig.~\ref{figure_example_ortc_vs_alg}). This strictness makes our algorithm very concise, but in essence it is still a special-case of ORTC.

\begin{figure}[t]
  \centering
  \includegraphics[width=0.7\linewidth]{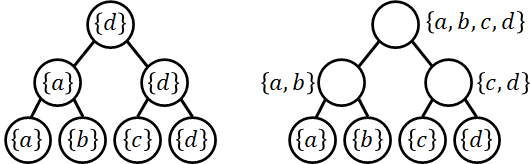}
  \caption{Coloring a toy example of a trie of depth $2$ with Algorithm~\ref{alg_consecutive_ranges} versus ORTC. Algorithm~\ref{alg_consecutive_ranges} has a single outcome (left), ORTC has several degrees of freedom (right): it can choose any of the four colors for the root, and then any of the two colors for the child that doesn't have this color in its list.}
  \label{figure_example_ortc_vs_alg}
\end{figure}

To elaborate how our algorithm is a special case, note that when $b=c$ (in the terms of Algorithm~\ref{alg_consecutive_ranges}), the color that ORTC will propagate to the grandparent must be $b$, because the parents will have the sets $\{a,b\}$ and $\{b,d\}$ whose intersection is non-empty. It also cannot be that $a=d$ is another candidate color for the grandparent, unless $a=d=b$ since the colors are segmented. Therefore, during the top-down phase the grandparent and the two parents will be colored by $b$ as stated by Algorithm~\ref{alg_consecutive_ranges}. On the other hand, if $b \ne c$, then the candidate colors of the parents will be $\{a,b\}$ and $\{c,d\}$, and the grandparent's will be $\{a,b,c,d\}$. During the top-down phase, ORTC allows to choose either $a$ or $d$ (or $b$ or $c$), and in accordance the color of the parents can be chosen to be $a$ and $d$, just as Algorithm~\ref{alg_consecutive_ranges} chooses.

\subsection{Computing the Best Segmented Allocation}

Given a partition $P = [p_1,\ldots,p_k]$ of $2^W$ into $k$ parts, Algorithm~\ref{alg_consecutive_ranges} does not directly tell us if $P$ has a smallest TCAM representation such that target $i$ gets $p_i$ consecutive addresses. However, if $k$ is not too large, we can still use Algorithm~\ref{alg_consecutive_ranges} to decide if there is an optimal allocation in segments. For each permutation $\pi$ of $p_1,\ldots,p_k$ we color the leaves of the trie consistently with $\pi$,\footnote{It is enough to check only $k!/2$ permutations if we exploit left to right symmetry, and even less if some parts of the partitions are equal.} and use Algorithm~\ref{alg_consecutive_ranges} to find a smallest TCAM representation that allocates the addresses in segments ordered by $\pi$. Then we also compute the complexity $\lambda(P)$ (recall Definition~\ref{definition_length_of_partition}). If there is a permutation $\pi^*$ for which the number of conflicts in its associated optimal trie coloring is $\lambda(P)$ then $P$ can be realized optimally in segments, and otherwise it cannot.

For specific fixed values of $k$ and $W$ there are $\binom{2^W-1}{k-1}$ ordered partitions of $2^W$ into $k$ parts. If we apply Algorithm~\ref{alg_consecutive_ranges} and compute the complexity of each of these partitions\footnote{It is enough to compute $\lambda(P)$ once for all permutations of the same unordered partition.} then we can conclude if there exists a partition of $2^W$ into $k$ parts that cannot be realized optimally in segments. We implemented this method and applied it to $k=3$ and $W \le 8$. The bound of $W \le 8$ was known to be sufficient since we knew that the partition $[12,49,195]$ cannot be realized optimally in segments by applying the techniques described in Section~\ref{section_segments_generalized}.

We found that $P^* = [13,13,6]$ is the only partition of $2^5$ into three parts that does not have an optimal representation that partitions the addresses into three segments, Theorem~\ref{theorem_minimal_example} states this. The uniqueness of $P^*$ for $W=5$ implies that there is no such partition $P$ for $k=3$ and $W' < 5$ (which we have seen by enumeration), since such a partition $P$ would give another partition that cannot be realized optimally in segments for $k=3$ and $W=5$ simply by multiplying each weight in $P$ by $2^{W-W'}$.

To figure out if there is a partition of $2^W$ for $W<5$ into $k>3$ parts that cannot be realized optimally in segments we made an exhaustive search for $3 \le k \le 2^W < 2^5$ (i.e.\ $W=2,3,4$), and did not find such a partition.\footnote{The complexity of an exhaustive search for fixed $W$ enumerates $\sum_{k=1}^{2^W}{\binom{2^W-1}{k-1}} = 2^{2^W-1} = O(2^{2^W})$ partitions. This complexity also subsumes an exhaustive search for all $W' \le W$. For $W \le 4$, the exhaustive search is feasible.} Hence, $[13,13,6]$ is truly minimal, simultaneously in $W$ and in $k$. 

\begin{theorem}[Partition with sub-optimal segments realization]
\label{theorem_minimal_example}
The partition $P^* = [13,13,6]$ ($k=3$,$W=5$) cannot be realized optimally as three segments, one for each target.
\end{theorem}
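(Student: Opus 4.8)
\emph{Proof plan.} The idea is a finite check that plays the two algorithms of Section~\ref{section_model} off against each other: Bit Matcher (Algorithm~\ref{alg_bit_matcher}) to compute $\lambda(P^*)$ exactly, and Algorithm~\ref{alg_consecutive_ranges} together with its optimality (Theorem~\ref{theorem_segments_optimality_algorithm}) to compute the cheapest segmented realization for each ordering of the three parts.

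First I would pin down $\lambda(P^*)$ by running Bit Matcher on $[13,13,6]$ with $W=5$. Following the partition through the levels, $d=0,1,2$ each contribute a single transaction, of sizes $1,2,4$, carrying it $[13,13,6]\to[12,14,6]\to[12,16,4]\to[16,16,0]$; level $d=3$ contributes nothing (no part has bit $3$ set), level $d=4$ contributes one transaction of size $16$ to reach $[0,32,0]$, and the clean-up step contributes one final transaction of size $2^W$ to target $0$. That is $5$ transactions, and since Bit Matcher is optimal~\cite{BitMatcher}, $\lambda(P^*)=5$. (As a sanity check, $\{\textsc{00100}\to1,\ \textsc{0000*}\to3,\ \textsc{100**}\to3,\ \textsc{1****}\to1,\ \textsc{*****}\to2\}$ is a $5$-rule LPM TCAM realizing $P^*$.)

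Next I would bound the cost of any segmented realization. Up to the left--right mirror automorphism of the trie, which preserves the number of conflicts, the orderings of the multiset $\{13,13,6\}$ reduce to $[13,13,6]$ (the mirror of $[6,13,13]$) and $[13,6,13]$. For each I would run Algorithm~\ref{alg_consecutive_ranges} on the depth-$5$ trie, propagating the monochromatic segments level by level as in the proof of Theorem~\ref{theorem_runtime_consecutive_ranges}. For $[13,13,6]$ the segment sizes climb as $(13,13,6)\to(6,7,3)\to(3,3,2)\to(2,1,1)\to(1,1)$ and one verifies that exactly one new conflict is created at each of the $W+1=6$ levels (the root included), for $6$ conflicts in total; for $[13,6,13]$ one likewise gets $(13,6,13)\to(6,4,6)\to(3,2,3)\to(2,2)\to(1,1)$ and again $6$ conflicts. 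Hence every segmented realization of $P^*$ needs at least $6$ rules.

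Finally I would conclude: by Theorem~\ref{theorem_segments_optimality_algorithm} the value $6$ is not merely what Algorithm~\ref{alg_consecutive_ranges} outputs but the minimum number of conflicts over all colorings extending each fixed single-segment-per-target leaf coloring; since $6>5=\lambda(P^*)$, no realization of $P^*$ that assigns one contiguous block of addresses to each target can be optimal. The work here is purely computational and there is no conceptual obstacle; the one point worth stating explicitly is that we never need an independent lower bound for segmented realizations --- Theorem~\ref{theorem_segments_optimality_algorithm} already certifies that Algorithm~\ref{alg_consecutive_ranges} attains the per-ordering optimum, so enumerating the few orderings (few, thanks to symmetry) is enough. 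This is exactly the procedure the authors report automating to certify that $[13,13,6]$ is the only such partition of $2^5$ into three parts.
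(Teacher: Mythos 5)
Your proposal is correct and follows essentially the same route as the paper: compute $\lambda(P^*)=5$ via Bit Matcher, reduce the orderings of $\{13,13,6\}$ to two by mirror symmetry, run Algorithm~\ref{alg_consecutive_ranges} on each to obtain $6$ conflicts, and invoke Theorem~\ref{theorem_segments_optimality_algorithm} to certify that $6$ is the per-ordering optimum. Your level-by-level traces (and the auxiliary $5$-rule TCAM) check out, so the only difference is that you spell out the verification the paper delegates to its figure.
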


\begin{proof}
Fig.~\ref{figure_mininal_example_non_consecutive} shows a trie and its corresponding TCAM that are constructed by the Bit Matcher algorithm for $P^*$, compared to the two outputs of Algorithm~\ref{alg_consecutive_ranges} on the two different possible permutations $[13,13,6]$ and $[13,6,13]$ ($[6,13,13]$ is symmetric to $[13,13,6]$). One can verify that $\lambda(P^*) = 5$, while Algorithm~\ref{alg_consecutive_ranges} requires $6$ transactions for any ordering of the segments.
\end{proof}

\begin{figure}[t!]
    \centering
    \subfigure[Optimal TCAM produced by Bit Matcher ($5$ rules)]{
    \includegraphics[width=0.9\linewidth]{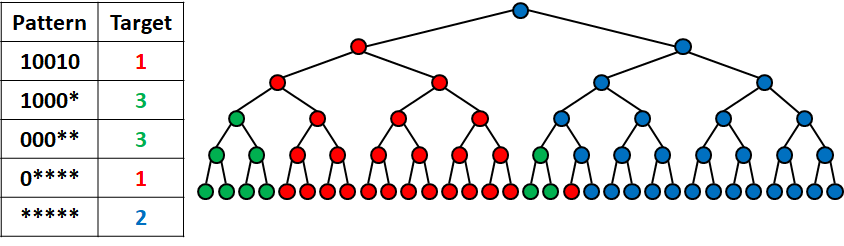}
    \label{figure_mininal_example_non_consecutive_bitmatcher_subfigure}
    }
    
    \subfigure[Optimal TCAM for ordered segments of lengths $13,13,6$ ($6$ rules)]{
    \includegraphics[width=0.9\linewidth]{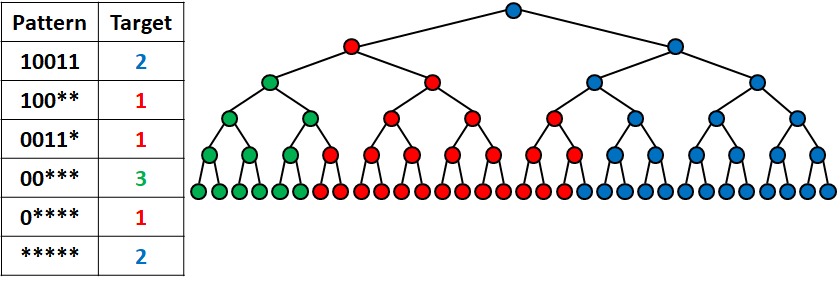}
    \label{figure_mininal_example_non_consecutive_13136_subfigure}
    }
    
    \subfigure[Optimal TCAM for ordered segments of lengths $13,6,13$ ($6$ rules)]{
    \includegraphics[width=0.9\linewidth]{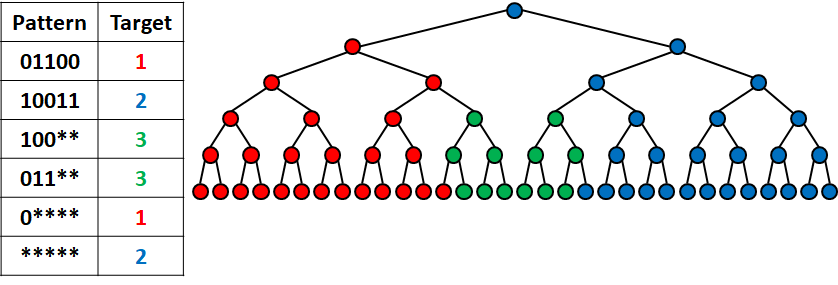}
    \label{figure_mininal_example_non_consecutive_13613_subfigure}
    }

  \caption{The partition $P^* = [13,13,6]$ cannot be realized optimally while also allocating a single segment of addresses per target. The cost of having segments is one additional rule in the TCAM table ($6$ rules, rather than $5$).}
  \label{figure_mininal_example_non_consecutive}
\end{figure}

\begin{remark}
Once we have a ``problematic'' partition, such as $P = [13,13,6]$, we can always ``embed'' it among partitions with larger sum and/or more parts.

This is trivially true of parts if size $0$ are allowed, since we can take $[p_1,\ldots,p_k]$ and construct from it $Q = [2^h \cdot p_1, \ldots, 2^h \cdot p_k, 0,\ldots,0]$ for $k\ \ge k$ and $W' = W+h$. If $0$ parts are not allowed, we can perturb $Q$ slightly, for example $Q = [2^h \cdot p_1, \ldots, 2^h \cdot p_k - 1, 1]$ if $k' = k+1$. Lemma~\ref{lemma_forced_transaction} (later) can be used to formally justify the fact that such small perturbations preserve the underlying ``problematic'' behavior.
\end{remark}

We conclude this section with an alternative proof to the fact that when $k=2$ every partition can be realized with the minimum number of rules while allocating a segment to each target (originally proven in \cite{AccurateExp}).

\begin{theorem}[Optimal allocation in segments for $k=2$]
\label{theorem_consecutive_k2}
Let $P = [p, 2^W-p]$ be any partition of $2^W$ into two parts. Then it has a smallest TCAM realization such that the first $p$ addresses are mapped to the first target, and the last $2^W-p$ addresses are mapped to the second target.
\end{theorem}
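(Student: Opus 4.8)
The plan is to reduce the statement to a statement purely about the complexity $\lambda$ via Theorem~\ref{theorem_segments_optimality_algorithm}, and then prove that by induction on $W$. Run Algorithm~\ref{alg_consecutive_ranges} on the ordered leaf-coloring with $x_1=p$ and $x_2=2^W-p$: it outputs a trie coloring extending that leaf-coloring (so its color-$1$ leaves are exactly the first $p$ addresses) with the fewest conflicts among all such extensions; call that number $N_{\mathrm{seg}}(P)$. It thus realizes $P$ with one segment per target using $N_{\mathrm{seg}}(P)$ rules, and it suffices to prove $N_{\mathrm{seg}}(P)=\lambda(P)$. One direction, $N_{\mathrm{seg}}(P)\ge\lambda(P)$, is immediate since a segmented leaf-coloring is a particular leaf-coloring; the content is the reverse inequality, i.e.\ that some globally optimal coloring has its color-$1$ leaves forming a prefix.

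For the induction I would work with a slightly stronger statement that tracks the interface to a parent: for each depth $d$, each count $a$, and each parent color $c$, let $B(d,a,c)$ be the minimum, over colorings extending ``the leftmost $a$ leaves are color $1$'' on a depth-$d$ trie, of the number of conflicts plus one extra when the trie-root color differs from $c$; and let $h(d,a,c)$ be the analogous minimum over \emph{all} colorings realizing that sub-partition. The claim is $B=h$ for all $(d,a,c)$; the theorem then follows by accounting for the always-present root conflict, and the $c$-bookkeeping is there only so the recursion over subtrees closes. Splitting a depth-$d$ trie at its root into $T_L,T_R$ gives the usual recursion for $h$: choose the root color, split the color-$1$ count $a=a_L+a_R$ between the two halves, recurse, and charge each half one conflict if its root disagrees with the root's color; the same recursion computes $B$ except that a single segment splits between the halves only as the \emph{maximally-left} split $a_L=\min(a,2^{d-1})$, $a_R=a-a_L$. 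Hence, given $B=h$ at depth $d-1$, the step $B(d,\cdot,\cdot)=h(d,\cdot,\cdot)$ reduces to one combinatorial claim: for each fixed parent color, the maximally-left split minimizes $B(d-1,a_L,\cdot)+B(d-1,a_R,\cdot)$ over all splits $a_L+a_R=a$.

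That packing claim is the main obstacle, and it has to be proved as a statement that the \emph{extreme} split is optimal, not by a local exchange: already at $W=3$ the function $B(d,\cdot,c)$ is non-monotone, and sliding a single color-$1$ leaf leftward can increase the cost. I would prove it by a separate induction on $d$, strengthened to a subadditivity-type inequality of the shape $B(d,u,c)+B(d,v,c)\ge B(d,u+v,c)+B(d,0,c)$ for $u+v\le 2^d$; combined with the mirror identity $B(d,a,c)=B(d,2^d-a,\bar c)$ (reflect the trie and swap the two colors, $\bar c$ being the other color) this handles splits with $a_L+a_R$ exceeding $2^d$ as well, and the inequality itself is proved by expanding the two-branch recursion for $B(d,\cdot,\cdot)$ (the branch $a\le 2^{d-1}$ versus $a\ge 2^{d-1}$) in terms of $B(d-1,\cdot,\cdot)$ and the trivial monochromatic values $B(d,0,\cdot)$, $B(d,2^d,\cdot)$. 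The base cases — $W=1$ with $P=[1,1]$, where Algorithm~\ref{alg_consecutive_ranges} uses $2=\lambda([1,1])$ rules, and the monochromatic instances $a\in\{0,2^d\}$ with value $1$ — anchor both inductions.
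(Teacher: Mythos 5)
Your proposal is correct in outline but takes a genuinely different route from the paper. The paper's proof is an explicit level-by-level correspondence: with $k=2$, Algorithm~\ref{alg_consecutive_ranges} creates at most one conflict per level, and it creates one exactly when the segment boundary falls inside $4$-cousins of the form $a,a,a,b$ or $a,b,b,b$; an induction on levels then shows this happens precisely when Bit Matcher performs a transaction at that level (both residual weights odd), with matching direction, so the segmented coloring has exactly $\lambda(P)$ conflicts. You never touch Bit Matcher: after invoking Theorem~\ref{theorem_segments_optimality_algorithm} you compare the segmented and unconstrained optima through the boundary-aware subtree quantities $B$ and $h$, and reduce everything to the claim that the maximally-left split is optimal, derived from the subadditivity inequality $B(d,u,c)+B(d,v,c)\ge B(d,u+v,c)+B(d,0,c)$ plus the mirror identity. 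That reduction is sound: $h$ does decompose over the root edge, the segment constraint does force the extreme split, and the mirror trick correctly covers splits with $a>2^{d-1}$; the subadditivity inequality is true (I checked it on small depths and the theorem itself implies $B=h$ is subadditive in the appropriate sense), and its inductive step does close along the lines you indicate, though you will need one ingredient you don't mention — when the optimal root colors chosen for the $u$- and $v$-instances differ, you need the easy bound $|B(d,a,1)-B(d,a,2)|\le 1$ (changing only the parent color changes the cost by at most one) together with the monochromatic values, plus a short case analysis over whether $u$, $v$, $u+v$ exceed $2^{d-1}$. So the load-bearing subadditivity lemma is still only asserted, not proved, and writing it out is the bulk of the work; what your approach buys is an argument that is self-contained on colorings (no appeal to a specific optimal algorithm), whereas the paper's proof is shorter and yields the stronger fact that the conflicts of Algorithm~\ref{alg_consecutive_ranges} coincide one-for-one, per level and per direction, with the transactions of an optimal sequence.
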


\begin{proof}
Since there are only two targets, Algorithm~\ref{alg_consecutive_ranges} produces at most one conflict per level. Furthermore, it indeed produces a conflict at level $\ell$ (leaves are at level $0$) if the transition between the two segments at that level is in $4$-cousins colored $a$,$a$,$a$,$b$ or $a$,$b$,$b$,$b$ from left to right (a $4$-cousin transition of the form $a,a,b,b$, or $a,a,a,a$, or $b,b,b,b$ does not produce a conflict). Algorithm~\ref{alg_consecutive_ranges} colors the parents $a,a$ in the first case ($a$,$a$,$a$,$b$), producing a transaction from $b$ to $a$, and colors the parents $b,b$ in the second case ($a$,$b$,$b$,$b$), producing a transaction from $a$ to $b$.

We prove by induction on the levels that Bit Matcher produces exactly the same transactions at levels $0,\ldots,\ell$. Assuming this is true up to level $\ell-1$, then the weights maintained by Bit Matcher after performing its transactions at levels $0,\ldots,\ell-1$ are exactly $x \cdot 2^\ell$ and $y \cdot 2^\ell$ where $x$ is the length of the first segment at level $\ell$ and $y$ is the length of the second segment at level $\ell$ as maintained by Algorithm~\ref{alg_consecutive_ranges} ($x+y = 2^{W-\ell}$). Bit Matcher performs a transaction between $x$ and $y$ iff both are odd, which happens iff the $4$-cousins where the transition between the segments occurs are $a$,$a$,$a$,$b$ or $a$,$b$,$b$,$b$. If $x$ equals $1$ modulus $4$ then $y$ equals $3$ modulus $4$ and the transaction is from $x$ to $y$ by bit-lexicographic order (defined in Algorithm~\ref{alg_bit_matcher}), and otherwise ($x = 3$ modulus $4$) it is from $y$ to $x$. This exactly corresponds to the transaction of the conflict produced by Algorithm~\ref{alg_consecutive_ranges}.
\end{proof}

\section{Generalized Segmentation}
\label{section_segments_generalized}

In the previous section we presented a partition that cannot be realized such that each target is allocated as a single segment. In this section we generalize this result to show that even if we allow each part to be fragmented into $M$ segments, there are still ``hard partitions'' which cannot be realized in an optimal TCAM that satisfies this constraint on the fragmentation. We achieve this by the following steps:
\begin{enumerate}[leftmargin=*,label=(\arabic*),noitemsep]
    \item In Subsection~\ref{subsection_transactions_graph} we define a transactions-graph (Definition~\ref{definition_graph_of_transactions}) of a sequence of transactions $s$. We can also think of the transactions-graph of a colored trie that implies a sequence $s$.
    
    \item In Subsection~\ref{subsection_graphs_of_segmened_coloring}, we prove that a coloring of a trie with $k$ colors (targets), must have a certain fragmentation in order for its transactions-graph to be a clique.
    
    \item In Subsection~\ref{subsection_graphs_of_forced_cliques}, we show how to construct a partition $P$ such that any shortest sequence that induces it has a transactions-graph which is a clique.
    
    \item Finally, Subsection~\ref{subsection_fragmentation_conclusion} combines the results to deduce that some partitions into $k$ parts must fragment at least one target into $\frac{k+1}{4} + \frac{1}{2k}$ parts, in order to be realized with the minimum possible number of TCAM rules.
\end{enumerate}

\subsection{Transactions-Graph}
\label{subsection_transactions_graph}

In this subsection we define the transactions-graph induced by a sequence of transactions.

\begin{definition}[Transactions-Graph]
\label{definition_graph_of_transactions}
Let $s$ be a sequence of transactions which contains only a single transaction with the target $0$. Let $P$ be the partition induced by $s$. Let $A_\ell$ denote the subset of all transactions of size no larger than $2^\ell$. Let $L$ be the largest integer such that when we apply all transactions of $A_L$ to $P$, all the weights remain positive. We define the \textbf{transactions-graph} of $s$ to be an undirected graph with $k$ nodes, one for each target (ignoring $0$). The graph has an edge $(i,j)$ if $\TRANSACT{i}{2^\ell}{j} \in A_L$ or $\TRANSACT{j}{2^\ell}{i} \in A_L$ for some $\ell \le L$. See Fig.~\ref{figure_transactions_graph} for example.
\end{definition}

\begin{figure}[ht]
  \centering
  \includegraphics[width=0.5\linewidth]{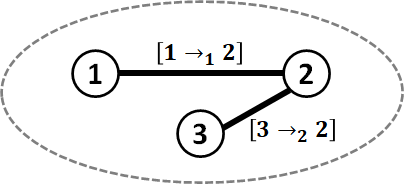}
  \caption{Illustrating a transactions-graph for the sequence $s = \TRANSACT{1}{1}{2}\TRANSACT{3}{2}{2}\TRANSACT{3}{4}{1}\TRANSACT{1}{16}{2}\TRANSACT{2}{32}{0}$. The sequence induces the partition $P = [13,13,6]$. For this sequence: $L=1$ and $A_L = \{\TRANSACT{1}{1}{2} , \TRANSACT{3}{2}{2} \}$.}
  \label{figure_transactions_graph}
\end{figure}

It is crucial to note that the transactions-graph is a property of the set of transactions in $A_L$, and it does not depend on their order in $s$. Moreover, it does not correspond to the partition since two different sequences that induce the same partition may have different transactions-graphs, and two different partitions may be induced by sequences with the same transactions-graphs.

\begin{remark}
\label{remark_graph_of_coloring}
A coloring of a trie $\chi$ corresponds to a unique set of transactions (the converse is not true, see Figure~\ref{figure_sequence_to_tries}), so the transactions-graph of a colored trie is the transactions-graph of its corresponding sequence.
\end{remark}

\begin{figure}[t]
  \centering
  \includegraphics[width=0.9\linewidth]{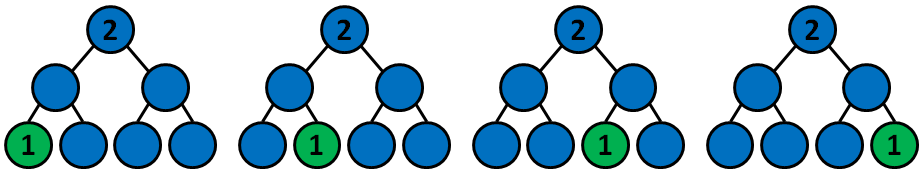}
  \caption{A sequence with more than a single transaction does not correspond to a unique coloring of the trie since we can swap subtrees whose roots are siblings. In this example, $s = \TRANSACT{1}{1}{2}\TRANSACT{2}{4}{0}$ and we have four different colored tries that correspond to this set of transactions.}
  \label{figure_sequence_to_tries}
\end{figure}

\subsection{Transactions-Graph of a Segmented Coloring}
\label{subsection_graphs_of_segmened_coloring}
In this subsection we analyze the structure of the transactions-graph of a sequence that corresponds to some colored trie. Concretely, let $\chi$ be a specific coloring of the leaves of the trie and let $\chi^*$ be an extension of $\chi$ with minimum conflicts. We claim that we can order the transactions corresponding to $\chi^*$ such that each transaction is between neighboring colors. We formalize this claim as follows.

\begin{definition}[Neighbouring leaves]
\label{definition_linear_neighbour}
Two leaves are neighbours if they are consecutive when we traverse the leaves from left to right.
Notice that two leaves which are siblings are also neighbors, but each leaf except the leftmost and the rightmost ones has another neighbor. For example, the leaves representing $01$ and $10$ on the trie of depth $2$ are also neighbours.
\end{definition}

\begin{definition}[Transaction of neighbouring colors]
\label{definition_neighbour_colors}
Assume that we apply the transactions in bottom-up order, with respect to depth in the trie. We say that a transaction is between neighbouring colors if when it is applied, it re-colors a leaf by the color of one of its neighbours.
\end{definition}

Now we prove that we can order any sequence so that its transactions are always of neighbouring colors. We break it down to the following lemmas.

\begin{lemma}
\label{lemma_path_root_to_leaf}
Let $\chi$ be a fixed coloring of the leaves of the trie, and let $\chi^*$ be an extension of $\chi$ to the whole trie, with a minimum number of conflicts. Then $\chi^*$ must contain a monochromatic path between the root and at least one of the leaves.
\end{lemma}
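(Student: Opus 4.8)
The plan is to argue by contradiction: assume $\chi^*$ has a minimum number of conflicts but contains no monochromatic root-to-leaf path, and exhibit a strictly cheaper recoloring, contradicting minimality. The key observation is that "no monochromatic root-to-leaf path" means that on every path from the root to a leaf there is at least one conflict edge, i.e.\ the set of conflict edges forms an edge-cut separating the root from all the leaves. Equivalently, if we delete all conflict edges from the trie, the connected component containing the root contains no leaf.

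First I would set up the following local modification. Consider the component $R$ of the root after deleting the conflict edges; it is a monochromatic subtree containing the root but no leaf, so it has some ``lowest'' nodes whose both children lie outside $R$ (are in conflict with their parent). Take the component $R$ and recolor it: I claim we can recolor all of $R$ by the color of one of the children just below $R$, and save a conflict. Concretely, pick a leaf-ward node $u$ at the bottom of $R$ with children $v,w$, both in conflict with $u$; recolor $R$ entirely with the color of $v$. This removes the conflict on the edge $(u,v)$. It may create a new conflict on the edge entering $R$ from above — but $R$ contains the root, which already counts as a conflict by Definition~\ref{definition_color_conflict} and has no parent edge, so there is no ``edge entering $R$ from above'' to worry about. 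The only edges whose conflict-status can change are the edges leaving $R$ downward (there are at least two such, since $R$ has no leaves so every bottom node of $R$ has two children outside $R$) and possibly the root's own (fictitious) conflict, which is unaffected. Recoloring $R$ by $v$'s color turns $(u,v)$ from a conflict into a non-conflict, and each of the other downward edges leaving $R$ either was a conflict and stays a conflict, or — if the child on the far side happened to already have $v$'s color — could turn from conflict to non-conflict as well; in no case does a non-conflict become a conflict, because before recoloring every node just outside $R$ was in conflict with its parent in $R$ (that's what it means to be outside $R$). Hence the number of conflicts strictly decreases, contradicting the minimality of $\chi^*$.

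The step I expect to be the main obstacle is making precise and fully rigorous the claim that recoloring $R$ cannot \emph{introduce} any new conflict. The subtle point is that a node $x$ just outside $R$ has some color $c(x) \ne$ (old color of $R$); after recoloring $R$ to $v$'s color, the edge from $x$'s parent (in $R$) to $x$ is a conflict unless $c(x)$ equals $v$'s color. So no new conflict appears on these edges, but I must also check that internal edges of $R$ stay monochromatic (they do, since $R$ is recolored uniformly) and that edges strictly below the frontier of $R$ are untouched (they are, since those subtrees are not recolored). A clean way to organize this is: conflicts after the modification $=$ (conflicts not incident to $R$'s frontier) $+$ (conflicts on $R$'s downward frontier edges after recoloring) $\le$ (conflicts not incident to $R$'s frontier) $+$ (frontier edges) $-1 \le$ original conflicts $-1$, using that every frontier edge was originally a conflict and at least one, namely $(u,v)$, is destroyed. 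I would also handle the degenerate edge cases (the whole trie monochromatic — then the root-to-leaf path trivially exists, so there is nothing to prove; or $R$ being just the root with two children in conflict — the argument above still applies verbatim).
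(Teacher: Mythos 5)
Your proof is correct and follows essentially the same route as the paper's: identify the monochromatic component $R$ (the paper's set $S$) of the root, note it contains no leaf under the contradiction hypothesis, and recolor all of $R$ by the color of a child hanging below it, which strictly reduces the number of conflicts while leaving the leaf coloring $\chi$ untouched. Your write-up just spells out in more detail why no new conflict can be created, which the paper asserts more briefly.
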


\begin{proof}
Let $S$ be the set of all nodes in the trie that have the same color as the root and are connected to the root by a monochromatic path, including the root itself. If $S$ does not contain a leaf, then there is at least one node $v$ (not necessarily a leaf) that doesn't belong to $S$ but its parent is in $S$. If we change the color of every node in $S$ to the color of $v$, we strictly reduce the number of conflicts in $\chi^*$, while keeping $\chi$ intact. This contradicts our assumption that $\chi^*$ has the smallest number of conflicts among all extensions of $\chi$. Hence, $S$ contains a leaf, and the root has a monochromatic path to this leaf.
\end{proof}

\begin{lemma}
\label{lemma_path_hanging_subtrees_with_minimum_conflicts}
Let $\chi$ be a fixed coloring of the leaves of the trie, and let $\chi^*$ be an extension of $\chi$ to the whole trie, with a minimum number of conflicts. Let $v$ be a node in conflict with its parent in $\chi^*$ and let $T_v$ be the subtree rooted at $v$. Then the restriction of $\chi^*$ to $T_v$ is a coloring of $T_v$ with minimum conflicts that extends the restriction of $\chi$ to $T_v$.
\end{lemma}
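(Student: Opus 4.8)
The plan is a standard cut-and-paste (exchange) argument establishing ``optimal substructure'' for minimum-conflict colorings. Assume towards a contradiction that the restriction $\chi^*|_{T_v}$ is \emph{not} a minimum-conflict extension of $\chi|_{T_v}$. Then there exists a coloring $\psi$ of $T_v$ that extends $\chi|_{T_v}$ and has strictly fewer conflicts than $\chi^*|_{T_v}$, where conflicts of a coloring of $T_v$ are counted as in Definition~\ref{definition_color_conflict} (so in particular the root $v$ of $T_v$ always counts as one conflict). Define a new coloring $\chi'$ of the whole trie by keeping $\chi^*$ on every node outside $T_v$ and using $\psi$ on the nodes of $T_v$. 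Since $v$ is in conflict with its parent in $\chi^*$, it has a parent, so $v$ is not the root of the whole trie and the parent of $v$ lies outside $T_v$; hence $\chi'$ is well-defined, and it still extends $\chi$ on the leaves (leaves outside $T_v$ are colored as in $\chi^*$, leaves inside $T_v$ as in $\psi$, both of which agree with $\chi$).

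The key step is conflict bookkeeping. Partition the conflicts of any whole-trie coloring that agrees with $\chi^*$ outside $T_v$ into four classes: (i) the root's self-conflict; (ii) conflicts on edges with neither endpoint in $T_v$; (iii) the conflict (if any) on the edge between $v$ and its parent; (iv) conflicts on edges with both endpoints in $T_v$. Classes (i) and (ii) depend only on the colors of nodes outside $T_v$, so they are identical for $\chi^*$ and for $\chi'$. Class (iv) for $\chi'$ consists exactly of the non-root conflicts of $\psi$, and class (iv) for $\chi^*$ consists of the non-root conflicts of $\chi^*|_{T_v}$; since $\psi$ has strictly fewer conflicts than $\chi^*|_{T_v}$ and both count the root $v$ once, $\psi$ has at least one fewer non-root conflict. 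Class (iii) contributes exactly $1$ for $\chi^*$ (because $v$ is in conflict with its parent by hypothesis) and at most $1$ for $\chi'$. Summing the four classes, $\chi'$ pays at most one more in class (iii) than $\chi^*$ but saves at least one in class (iv), so $\chi'$ has strictly fewer conflicts than $\chi^*$ overall, contradicting the minimality of $\chi^*$ among extensions of $\chi$.

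I expect the only point requiring care — and the one I would state explicitly — is aligning the counting conventions: the root of the subtree $T_v$ is, per Definition~\ref{definition_color_conflict}, always a conflict of the coloring of $T_v$, and this single unit corresponds precisely to the whole-trie edge between $v$ and its parent, which is a genuine conflict \emph{exactly because} the hypothesis says $v$ conflicts with its parent. That hypothesis is what prevents a recoloring of $T_v$ from ever ``undoing'' that edge and thereby wiping out the savings obtained inside $T_v$. Once this correspondence is in place, the inequality is immediate and there are no nontrivial calculations; Lemma~\ref{lemma_path_root_to_leaf} is not needed.
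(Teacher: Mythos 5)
Your proof is correct and is essentially the paper's own argument: the paper proves the lemma with the same cut-and-paste exchange in three sentences (any cheaper recoloring of $T_v$ would, precisely because $v$ is already in conflict with its parent, strictly reduce the total number of conflicts of $\chi^*$, contradicting its minimality among extensions of $\chi$), and your four-class bookkeeping simply makes that counting explicit, correctly noting that Lemma~\ref{lemma_path_root_to_leaf} is not needed. One small wording slip: your summary says $\chi'$ ``pays at most one more'' in class (iii), which by itself would only yield a non-strict inequality, but the facts you actually established (exactly $1$ for $\chi^*$ versus at most $1$ for $\chi'$, plus a saving of at least one in class (iv)) already give the strict decrease, so the argument stands.
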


\begin{proof}
Assume by contradiction that the claim is not true. This means that we can reduce the number of conflicts in $T_v$ by recoloring it. Since $v$ is already in conflict with its parent in $\chi^*$, this recoloring must also reduce the number of conflicts in $\chi^*$. But this contradicts our assumption that $\chi^*$ has the smallest number of conflicts among all colorings consistent with $\chi$.
\end{proof}

\begin{lemma}
\label{lemma_transactions_ordered_by_neighbours}
Let $\chi$ be a fixed coloring of the leaves of the trie, and let $\chi^*$ be an extension of $\chi$ to the whole trie, with a minimum number of conflicts. Consider the transactions corresponding to the conflicts of $\chi^*$ excluding the transaction into target $0$ which correspond to the conflict at the root. It is possible to order the transactions in bottom-up order such that when applied in order, each transaction is a transaction of neighbouring colors when it is applied.
\end{lemma}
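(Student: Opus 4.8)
The plan is to process the conflicts of $\chi^*$ bottom-up, in any order compatible with the subtree-containment partial order, so that by Remark~\ref{remark_transactions_in_coloring} the step handling a conflict node $v$ recolours the then-monochromatic subtree $T_v$ by the colour $c' = \chi^*(\mathrm{parent}(v))$. It remains to choose the order, within the bottom-up constraint, so that at the moment $T_v$ is recoloured some leaf of $T_v$ is adjacent (in the left-to-right leaf order of Definition~\ref{definition_linear_neighbour}) to a leaf whose current colour is $c'$. The engine for this is Lemma~\ref{lemma_path_root_to_leaf}: the target colour $c'$ always comes with a ``seed leaf'' that can be steered next to $T_v$.

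Concretely, I would prove by induction on the depth of a subtree $T$ the following strengthened statement. Suppose $\psi$ is a colouring of $T$ such that (i) $\psi$ restricted to $T_w$ is a minimum-conflict extension of $\chi$ for every conflict node $w$ strictly inside $T$, and (ii) there is a leaf $\ell$ of colour $d := \psi(\mathrm{root}(T))$ which is either joined to $\mathrm{root}(T)$ by a monochromatic path inside $T$ or lies immediately to the left, or immediately to the right, of the leaf-block of $T$, and $\ell$ keeps colour $d$ throughout the processing of $T$. Then the conflicts strictly inside $T$ can be ordered, bottom-up, so that every corresponding transaction is of neighbouring colours. Applied to the whole trie with $\psi=\chi^*$ this is the lemma: hypothesis (i) is Lemma~\ref{lemma_path_hanging_subtrees_with_minimum_conflicts}, and hypothesis (ii) with an internal path is Lemma~\ref{lemma_path_root_to_leaf}.

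For the inductive step, let $x=\mathrm{root}(T)$ with children $a,b$ (left/right subtrees $T_a,T_b$) and $d=\psi(x)$. The seed $\ell$ picks out one side: if $\ell$ is internal then a monochromatic $d$-path from $x$ cannot descend into a child whose colour differs from $d$, so $\ell$ lies in whichever of $T_a,T_b$ has colour $d$; if $\ell$ is external it is adjacent to the boundary leaf of one definite side. Process that side first. Recurse on it — it inherits a seed, so hypothesis (ii) holds, and hypothesis (i) is inherited from $T$ since a conflict node strictly inside a child is strictly inside $T$ — and, if that child is itself in conflict with $x$, finish it by recolouring its subtree to $d$, which is legal because the inherited seed sits next to that subtree's boundary leaf; after this the chosen child's subtree is monochromatic $d$, so the boundary leaf it shares with the other side becomes a fresh $d$-coloured seed that remains $d$. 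Now handle the other child $c\in\{a,b\}$: if $\psi(c)=d$, recurse on $T_c$ using the fresh $d$-seed; if $\psi(c)\neq d$ it is a conflict node, so by hypothesis (i) and Lemma~\ref{lemma_path_root_to_leaf} its subtree carries its own internal seed of colour $\psi(c)$; recurse to make $T_c$ monochromatic $\psi(c)$ and then perform the conflict at $c$, recolouring $T_c$ to $d$ against the fresh $d$-seed. The base cases (a single leaf, and a depth-one subtree) are immediate, the latter because trie-siblings are neighbours.

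The step I expect to be the main obstacle is exactly the case where \emph{both} children of a node are in conflict with it, so that $T_a$ and $T_b$ must both eventually be recoloured to the same colour $d$: the naive rule ``recolour $T_a$ after $T_b$ and $T_b$ after $T_a$'' is circular, so at least one of these two recolourings must be justified from the \emph{outer} side of $T$. This is precisely why the induction carries the seed of Lemma~\ref{lemma_path_root_to_leaf}: the seed lies on a definite side (and this case only arises when the seed is external, since an internal monochromatic $d$-path forces one child to have colour $d$), which breaks the symmetry — recolour the seed-side subtree first against the seed, then the other subtree against the freshly recoloured sibling. The delicate part is the bookkeeping: phrasing hypothesis (ii) so that it is inherited by both children in every case split (seed internal vs.\ external; each child of colour $d$ or not), and checking that a seed is never overwritten before it is used, which follows because internal seeds lie on a monochromatic path and hence are never interior to any recoloured conflict-subtree, and external seeds lie outside the subtree currently being processed. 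Everything else is routine case analysis.
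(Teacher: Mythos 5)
Your plan is correct, and it leans on exactly the two ingredients the paper uses — Lemma~\ref{lemma_path_hanging_subtrees_with_minimum_conflicts} to make the induction hypothesis available inside every conflict subtree, and Lemma~\ref{lemma_path_root_to_leaf} to supply a leaf of the root colour against which off-colour subtrees are recoloured in an order spreading away from it — but you package the induction differently. The paper peels off the maximal monochromatic component $S$ containing the root, recursively orders the transactions inside each subtree hanging off $S$, and then performs one explicit spatial sweep from a leaf of $S$ outwards, recolouring the (by then monochromatic) hanging subtrees one by one; the neighbouring-colour property is immediate from the sweep order. You instead recurse on the two children of the root and strengthen the statement with an internal/external ``seed'' invariant, propagating a fresh seed across the sibling boundary once the first side is finished. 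The two are equivalent in substance (unrolling your binary recursion reproduces the paper's outward sweep), and your case analysis is sound: the only genuinely asymmetric case, both children in conflict with the root, indeed can only occur with an external seed, and your observation that internal seeds are never interior to a recoloured conflict subtree (their ancestors lie on a monochromatic path, hence are conflict-free) is the right justification that seeds survive until used. What the paper's formulation buys is brevity — no seed bookkeeping, one sweep handles arbitrarily many hanging subtrees at once; what yours buys is a sharper invariant that makes explicit where the root-colour witness lives at every stage, at the cost of a heavier hypothesis and more case splits. Either route yields the lemma.
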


\begin{proof}
We prove the claim by induction on the depth of the trie, see also Figure~\ref{figure_order_for_coloring_by_neihgbours} to clarify the steps. The basis of the induction is a trie of depth $0$ for which the claim holds vacuously.

For the induction step assume that the claim is true for any trie of depth at most $D$, and consider a trie of depth $D+1$. Let $S$ be the set of all nodes connected to the root by a monochromatic path. Each node $v \notin S$ such that its parent is in $S$ defines a subtree $T_v$ of depth at most $D$. By Lemma~\ref{lemma_path_hanging_subtrees_with_minimum_conflicts}, the restriction of $\chi^*$ to $T_v$ is of minimum number of conflicts, so by the induction assumption we can order the transactions that correspond to the conflicts in $T_v$, such that all of them are between neighbouring colors (when applied in order). We merge all the transactions of these subtrees such that the relative order of transactions when restricted to a subtree maintained in bottom-up order, the simplest way is to just concatenate the sub-sequences one after the other.

The coloring of the trie after applying these transactions consists of a connected monochromatic set of nodes $S$ that includes the root, from which monochromatic subtrees are hanging, none of them has the same color as the root. Now, by Lemma~\ref{lemma_path_root_to_leaf} we know that $S$ contains at least one leaf. Let $v$ be an arbitrary leaf in $S$. We scan the leaves starting from $v$ and spreading outwards, for instance all the way left, and then all the way right. Whenever we encounter a leaf $u$ whose color is different than the color of $S$, we apply the transaction that corresponds to the root of the monochromatic tree that contains $u$. This sequentially recolors the leaves by the color of $v$ and therefore by definition each of the transactions that we apply is between neighbouring colors, see also Figure~\ref{figure_order_for_coloring_by_neihgbours}. Since we recolor disjoint subtrees it is clear that we get a bottom-up sequence.
\end{proof}

\begin{figure}[t!]
    \centering
    \subfigure[Input coloring $\chi$ (leaves) extended to $\chi^*$]{
    \includegraphics[width=0.7\linewidth]{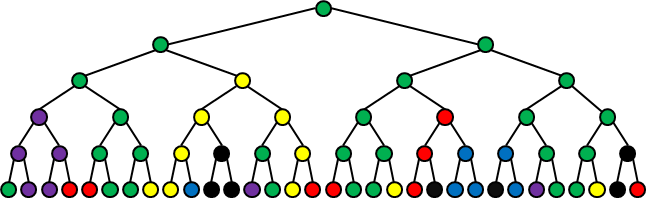}
    \label{figure_order_for_coloring_by_neihgbours_initial}
    }
    
    \subfigure[State of coloring after transactions are applied to solve smaller subtrees, such that they become monochromatic. The order of re-coloring the subtrees emanates from the leaf marked in a square.]{
    \includegraphics[width=0.7\linewidth]{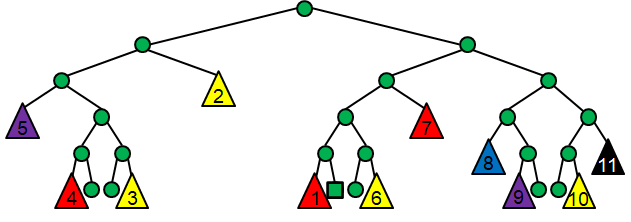}
    \label{figure_order_for_coloring_by_neihgbours_hanging_subtrees}
    }

  \caption{An illustration of how Lemma~\ref{lemma_transactions_ordered_by_neighbours} recursively orders the transactions corresponding to a coloring with minimum number of conflicts. \ref{figure_order_for_coloring_by_neihgbours_initial} shows the original trie, colored with six colors. The set $S$ contains all the green nodes which are connected by green path to the root ($S$ does not contain all green nodes). \ref{figure_order_for_coloring_by_neihgbours_hanging_subtrees} shows the inductive step: let $s_i$ be the sequence of transaction that makes the $i$th subtree monochromatic such that each is a transaction between neighbouring colors. Then the final sequence of transaction begins with a concatenation of $s_1,\ldots,s_{11}$ in arbitrary order. It ends with transactions that re-color each subtree hanging off of $S$ to green, in the order marked in the figure, emanating away from the green square leaf.}
  \label{figure_order_for_coloring_by_neihgbours}
\end{figure}

\begin{lemma}
\label{lemma_graph_too_sparse_for_segments}
Let $\chi$ be a fixed coloring of the leaves of the trie with $k$ colors, in $d$ monochromatic segments. Let $\chi^*$ be an extension of $\chi$ to the whole trie, with a minimum number of conflicts. Then the transactions-graph of the sequence corresponding to $\chi^*$ contains at most $2d - k - 1$ edges.
\end{lemma}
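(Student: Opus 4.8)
\textbf{The plan} is to work level by level in the trie and track how the colour pattern of $\chi^*$ changes as we move upward. For $0\le\ell\le W$ write $\Sigma_\ell$ for the left-to-right sequence of $\chi^*$-colours of the $2^{W-\ell}$ nodes at level $\ell$, let $\delta_\ell$ be the number of maximal monochromatic runs of $\Sigma_\ell$ (so $\Sigma_\ell$ has $\delta_\ell-1$ colour boundaries), and let $D_\ell$ be the set of unordered pairs of distinct colours that occur consecutively in $\Sigma_\ell$, so $|D_\ell|\le\delta_\ell-1$. First I would normalise $\chi^*$ so that $\chi^*(u)\in\{\chi^*(v),\chi^*(w)\}$ for every internal node $u$ with children $v,w$; a minimum-conflict colouring can be brought to this form by conflict-preserving recolourings of internal nodes (and the colourings produced by ORTC already have it). Granting this, a conflict at a node $v$ of level $\ell$ has $\chi^*(\mathrm{parent}(v))$ equal to the colour of $v$'s sibling, so its two colours are consecutive in $\Sigma_\ell$; since a transaction of size $2^\ell$ is exactly a conflict at level $\ell$ and the edges of the transactions-graph come from the transactions in $A_L$, i.e.\ the conflicts at levels $0,\dots,L$, we get $E(G)\subseteq\bigcup_{\ell=0}^{L}D_\ell$. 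We also have $\delta_0=d$, hence $|D_0|\le d-1$; and because applying all of $A_L$ to $P$ leaves every weight positive while making every level-$(L+1)$ subtree monochromatic, every colour occurs at level $L+1$, so $\delta_{L+1}\ge k$ (the degenerate cases $k=1$ or $A_L=\emptyset$ being trivial).

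The next step exploits the normalisation: the colour of a level-$(\ell+1)$ node is the colour of one of its children, so $\Sigma_{\ell+1}$ is obtained from $\Sigma_\ell$ by keeping, from each consecutive pair of positions, exactly one; call the resulting strictly increasing map on positions $\rho$. I would extract two facts from this ``keep-one'' picture. (i) A colour boundary of $\Sigma_{\ell+1}$ between positions $p$ and $p+1$ can be charged to a colour boundary of $\Sigma_\ell$ inside the position-window $[\rho(p),\rho(p+1)]$; since $\rho$ is increasing these windows are pairwise disjoint, so $\delta_{\ell+1}\le\delta_\ell$ and $(\delta_\ell)_\ell$ is non-increasing, giving $\delta_L\ge\delta_{L+1}\ge k$. (ii) If the pair $\{a,b\}$ realised at that boundary of $\Sigma_{\ell+1}$ is \emph{not} in $D_\ell$, then $\rho(p)$ (colour $a$) and $\rho(p+1)$ (colour $b$) are not adjacent in $\Sigma_\ell$, so the window has length at least $2$ and, as $\Sigma_\ell$ passes from $a$ to $b$ inside it without a direct $a$-to-$b$ transition, it contains at least two colour boundaries of $\Sigma_\ell$.

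To finish, let $n_{\ell+1}$ count the pairs lying in $D_{\ell+1}$ but in none of $D_0,\dots,D_\ell$, so that $\bigl|\bigcup_{\ell=0}^{L}D_\ell\bigr|\le|D_0|+\sum_{\ell=0}^{L-1}n_{\ell+1}$. Summing the estimates of (i) and (ii) over the disjoint windows yields $(\delta_\ell-1)\ge(\delta_{\ell+1}-1)+\#\{\text{boundaries of }\Sigma_{\ell+1}\text{ realising a pair not in }D_\ell\}$, and since $n_{\ell+1}$ counts a subset of those pairs, $n_{\ell+1}\le\delta_\ell-\delta_{\ell+1}$. Telescoping, $|E(G)|\le\bigl|\bigcup_{\ell=0}^{L}D_\ell\bigr|\le(d-1)+\sum_{\ell=0}^{L-1}(\delta_\ell-\delta_{\ell+1})=(d-1)+(\delta_0-\delta_L)=2d-1-\delta_L\le 2d-k-1$.

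The step I expect to need the most care is the normalisation in the first paragraph: a minimum-conflict colouring need not literally satisfy $\chi^*(u)\in\{\chi^*(v),\chi^*(w)\}$, since an internal node may carry a colour that only flows down from above, so one must verify that the conflict-preserving recolourings used to reach the reduced form do not enlarge the transactions-graph (or argue $E(G)\subseteq\bigcup_\ell D_\ell$ and the run count directly for an arbitrary minimum-conflict colouring). The window double-counting in (ii) is routine once the keep-one picture is set up. One could instead run the whole argument through Lemma~\ref{lemma_transactions_ordered_by_neighbours}, ordering the transactions so each recolours a block to the colour of an adjacent block and tracking the number of monochromatic runs (new pairs appear only when a whole run is absorbed, and each such transaction drops the run count by one), but then the difficulty reappears as the need to relate $A_L$ to the early part of that order.
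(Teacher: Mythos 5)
Your level-by-level counting is, in itself, sound and genuinely different from the paper's argument (the paper orders the transactions via Lemma~\ref{lemma_transactions_ordered_by_neighbours} so that each one is between colours of adjacent segments, and then counts the $d-1$ initial adjacencies plus at most $d-k$ new adjacencies created when segments are swallowed). However, the normalisation step you flag is a genuine gap, not a formality. A minimum-conflict extension need not satisfy $\chi^*(u)\in\{\chi^*(v),\chi^*(w)\}$: for leaves coloured $a,b,c,c$ there is a minimum-conflict extension in which the left level-$1$ node is coloured $c$ (like the root), a colour that appears nowhere below it; only nodes that are themselves in conflict are guaranteed (via Lemmas~\ref{lemma_path_hanging_subtrees_with_minimum_conflicts} and~\ref{lemma_path_root_to_leaf}) to carry a colour of one of their leaves. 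Without the normalisation both pillars of your argument fail: the pair of a conflict need not lie in $D_\ell$ (the sibling may carry a third colour), and $\Sigma_{\ell+1}$ is not a keep-one subsequence of $\Sigma_\ell$. Moreover, the lemma is a statement about the given $\chi^*$, whose transaction set is determined by $\chi^*$ (Remark~\ref{remark_graph_of_coloring}), so it is that graph you must bound; the conflict-preserving recolourings that normalise $\chi^*$ change the transaction set in a way that neither contains nor is contained in the original. Concretely, if a node and its parent are coloured $c$ and its children $a,b$, the original colouring contributes the edges $\{a,c\}$ and $\{b,c\}$, whereas after recolouring that node to $a$ one gets $\{a,b\}$ at level $\ell$ and $\{a,c\}$ at level $\ell+1$: the edge $\{b,c\}$ disappears and $\{a,b\}$ is new. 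Hence proving the bound for the normalised colouring does not transfer to $\chi^*$, and the proposal as written does not establish the lemma.

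To close the gap you must either redo the counting directly for an arbitrary minimum-conflict $\chi^*$ (which seems to force you back onto the structural facts of Lemmas~\ref{lemma_path_root_to_leaf} and~\ref{lemma_path_hanging_subtrees_with_minimum_conflicts}), or take the route you mention at the end, which is exactly the paper's. The complication you anticipate there — relating $A_L$ to the early part of the neighbour-ordered sequence — is handled cheaply: edges of the transactions-graph arise only from transactions after which every colour still has positive weight, hence at least one surviving segment per colour, so at most $d-k$ segments can be swallowed while edges are still being produced; together with the $d-1$ initially adjacent pairs this gives the $2d-k-1$ bound without ever needing the colouring to be in your reduced form.
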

\begin{proof}
By definition, the transactions-graph of a sequence does not depend on the order of the transactions. Moreover, recall that there is an edge between colors $c_1$ and $c_2$ if there is at least one transaction between them (there could be many such transactions).

By Lemma~\ref{lemma_transactions_ordered_by_neighbours} we can order the transactions bottom-up such that transactions are always between neighbouring colors, i.e.\ between colors that are associated with adjacent segments. Therefore by the preceding discussion we have to bound the number of pairs of colors with adjacent segments. We bound this by bounding the total number of adjacent segments of different colors as follows.

\begin{enumerate}[leftmargin=*,label=(\arabic*),noitemsep]
    \item Initially, there are $d$ monochromatic segments overall, which yields $d-1$ pairs of adjacent segments of different colors.
    
    \item Two segments that are not adjacent initially may become adjacent. For example, in Figure~\ref{figure_order_for_coloring_by_neihgbours_initial} 
    no red and blue segments are adjacent, but after recoloring the monochromatic subrees, we get such an adjacent pair in Figure~\ref{figure_order_for_coloring_by_neihgbours_hanging_subtrees} (subtrees 7 and 8). Two segments become adjacent when all segments between them have been swallowed, due to re-coloring as transactions are applied. It follows that the number of pairs of segments that become adjacent is bounded by the number of segments that are swallowed, which is at most $d-k$, since by definition a single segment of each color must remain as long as we still produce edges of the transactions-graph.
\end{enumerate}

Overall, we get at most $(d-1) + (d-k) = 2d-k-1$ pairs of adjacent segments, which is an upper bound on the number of edges in the transactions-graph of $\chi^*$.
\end{proof}

\begin{corollary}
\label{corollary_minimum_clique_fragmentation}
Let $\chi^*$ be a coloring of a trie, with $k$ colors and a minimum number of conflicts. If the transactions-graph of the sequence corresponding to $\chi^*$ is a clique, then in the restriction of $\chi^*$ to the leaves of the trie, at least one of the colors is fragmented into at least $\frac{k+1}{4} + \frac{1}{2k}$ segments.
\end{corollary}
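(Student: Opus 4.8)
The plan is to read the required fragmentation directly off the edge bound of Lemma~\ref{lemma_graph_too_sparse_for_segments}, by comparing it against the number of edges in a clique, and then to close the argument with a pigeonhole over the $k$ colors. First I would let $d$ be the number of monochromatic segments in the restriction of $\chi^*$ to the leaves of the trie. Note that $d$ is not prescribed in advance; the goal is precisely to show that the clique hypothesis forces $d$ to be large.

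Next I would observe that the transactions-graph of $\chi^*$ has exactly $k$ vertices (one per color, ignoring $0$), so being a clique it has $\binom{k}{2} = \frac{k(k-1)}{2}$ edges. On the other hand, regarding the leaf coloring of $\chi^*$ as a coloring $\chi$ with $k$ colors in $d$ segments, Lemma~\ref{lemma_graph_too_sparse_for_segments} bounds the number of edges of this same graph by $2d - k - 1$. Combining the two inequalities, $\frac{k(k-1)}{2} \le 2d - k - 1$, which rearranges to $d \ge \frac{k^2 + k + 2}{4}$.

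Finally, since these $d$ segments are distributed among only $k$ colors, by pigeonhole at least one color is split into at least $\lceil d/k \rceil \ge d/k \ge \frac{k^2+k+2}{4k} = \frac{k+1}{4} + \frac{1}{2k}$ segments, which is exactly the claimed bound.

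Given Lemma~\ref{lemma_graph_too_sparse_for_segments}, this is essentially a one-line computation, so there is no substantial obstacle; the only points that need a little care are (i) that we apply the lemma with the a priori unknown quantity $d$ (the number of segments $\chi^*$ actually exhibits on the leaves), so the edge inequality turns into a lower bound on $d$ rather than an upper bound on the edge count, and (ii) that the pigeonhole step produces a color owning at least $\lceil d/k\rceil$ segments, and since $\lceil d/k\rceil \ge d/k$ the (generally non-integer) value $\frac{k+1}{4} + \frac{1}{2k}$ is still a legitimate lower bound.
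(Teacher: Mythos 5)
Your proposal is correct and follows essentially the same argument as the paper: apply Lemma~\ref{lemma_graph_too_sparse_for_segments} to the leaf restriction of $\chi^*$, compare $\binom{k}{2} \le 2d-k-1$ to get $d \ge \frac{k^2+k+2}{4}$, and conclude by pigeonhole that some color has at least $d/k \ge \frac{k+1}{4}+\frac{1}{2k}$ segments. The two clarifying remarks you add (treating $d$ as the unknown and noting $\lceil d/k\rceil \ge d/k$) are fine and do not change the argument.
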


\begin{proof}
A clique over $k$ nodes has $\binom{k}{2}$ edges. Let $d$ be the number of monochromatic segments when we restrict $\chi^*$ to the leaves. By Lemma~\ref{lemma_graph_too_sparse_for_segments} we have $\binom{k}{2} \le 2d - k - 1 \Rightarrow \frac{k^2 + k}{4} + \frac{1}{2} \le d$. It follows that at least one color must have at least $\frac{d}{k} \ge \frac{k+1}{4} + \frac{1}{2k}$ segments.
\end{proof}

\subsection{Forcing Transactions}
\label{subsection_graphs_of_forced_cliques}
Now we take a small detour from trie coloring, and go back to deal purely with transactions and sequences. We present a method to construct a partition, such that any shortest sequence that induces it contains a \textbf{fixed} set of transactions.

\begin{lemma}[Forced Transaction]
\label{lemma_forced_transaction}
Let $P = [p_1, \ldots, p_k]$ be any partition such that $\forall i: p_i > 0$, $k \ge 2$ and $\sum_i p_i = 2^W$. Let $Q = [q_1, \ldots, q_k]$ be such that: $q_1 = 8 p_1 - 1$, $q_2 = 8p_2 + 1$, $\forall i \ge 3: q_i = 8 p_i$.
Then $Q$ is a partition such that $\forall i: q_i > 0$, $\sum_i q_i = 2^{W+3}$, and any shortest sequence that induces $Q$, contains the transaction \TRANSACT{2}{1}{1}.
\end{lemma}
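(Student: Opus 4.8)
The plan is to understand precisely how the complexity $\lambda$ changes when we multiply a partition by $8$ and then perturb two coordinates by $\pm 1$, and to argue that the cheapest way to ``absorb'' the perturbation is exactly the transaction $\TRANSACT{2}{1}{1}$. First I would observe that $8P = [8p_1,\ldots,8p_k]$ has the same complexity as $P$: a shortest sequence for $P$ lifts to a shortest sequence for $8P$ by scaling every transaction size by $8$, and conversely the bottom three levels of any sequence for $8P$ are forced to be empty (all weights are divisible by $8$, so no odd-parity pairing ever occurs at levels $0,1,2$ when we run Bit Matcher, and Bit Matcher is optimal). Hence $\lambda(8P) = \lambda(P)$. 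The partition $Q$ differs from $8P$ only in that $q_1 = 8p_1 - 1$ is odd and $q_2 = 8p_2 + 1$ is odd, with all other $q_i \equiv 0 \pmod 8$ and in particular $\ge 8$.

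Next I would run (or analyze) Bit Matcher on $Q$. At level $d=0$, the set $A = \{i : q_i[0] = 1\}$ is exactly $\{1,2\}$, since those are the only odd weights; Bit Matcher pairs them and emits a size-$1$ transaction between $q_1$ and $q_2$. The direction is determined by bit-lexicographic order, but the key point is that after this single transaction both weights become even (indeed divisible by $8$: $q_1 - 1 = 8p_1$ if the transaction goes $2 \to 1$, wait—one must check the direction). The cleaner route: after the level-$0$ transaction the resulting partition is divisible by $2$, and in fact one of the two outcomes, namely performing $\TRANSACT{2}{1}{1}$, yields exactly $[8p_1, 8p_2, 8p_3, \ldots] = 8P$, whose complexity is $\lambda(P)$. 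Therefore $\lambda(Q) \le 1 + \lambda(P)$. For the lower bound, any sequence inducing $Q$ must contain at least one transaction of odd size (since $q_1$ is odd, some transaction must change the parity of coordinate $1$), so its bottom level is nonempty; removing the level-$0$ transactions from an optimal sequence for $Q$ and arguing the residual partition has complexity $\ge \lambda(P)$ gives $\lambda(Q) \ge 1 + \lambda(P)$, hence equality, and moreover every shortest sequence uses \emph{exactly one} size-$1$ transaction, which (as the only odd weights are $q_1,q_2$) must be between targets $1$ and $2$, and after it the partition must be $8$-divisible with complexity $\lambda(P)$.

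The remaining—and main—obstacle is pinning down the \emph{direction} of that forced size-$1$ transaction, i.e.\ ruling out $\TRANSACT{1}{1}{2}$. This is where the specific choice $q_1 = 8p_1 - 1$, $q_2 = 8p_2 + 1$ matters: after a transaction $1 \to 2$ the weights become $8p_1 - 2$ and $8p_2 + 2$, which are $\equiv 2 \pmod 8$ (assuming $p_1, p_2 \ge 1$ so these are not $\equiv 6$), so they are \emph{not} divisible by $4$; the resulting partition then forces further odd-level-$1$ transactions, and I would show this makes the total length strictly larger than $1 + \lambda(P)$. Concretely, I would compare the two residual partitions $R_{2\to1} = 8P$ and $R_{1\to2} = [8p_1 - 2, 8p_2 + 2, 8p_3, \ldots]$: the former has complexity $\lambda(P)$ by the scaling argument, while for the latter I would show $\lambda(R_{1\to 2}) \ge \lambda(P) + 1$ by noting that at level $1$ Bit Matcher must again pair coordinates $1$ and $2$ (they are the unique weights $\equiv 2 \bmod 4$), costing an extra transaction that the clean sequence through $8P$ avoids. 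Handling the edge cases ($p_1 = 1$ or $p_2$ such that a borrow cascades, and the possibility that a shortest sequence does not coincide with Bit Matcher's output) will require invoking Lemma-style exchange arguments and the optimality of Bit Matcher rather than a direct computation, and that is the delicate part of the write-up.
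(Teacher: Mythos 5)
There is a genuine gap. The lemma quantifies over \emph{every} shortest sequence that induces $Q$, and such sequences need not be level-ordered, need not pass through the intermediate partition $8P$, and need not resemble a Bit Matcher run; Bit Matcher's optimality only pins down the value $\lambda(Q)$, not the structure of arbitrary optimal sequences. Your parity observation does not give that ``every shortest sequence uses exactly one size-$1$ transaction, which must be between targets $1$ and $2$'': for example, the pair $\TRANSACT{2}{1}{3}\TRANSACT{3}{1}{1}$ is parity-consistent at every target, and more generally the unit of surplus can be routed through third targets or compensated by size-$2$ and size-$4$ transactions interleaved with the large ones (e.g.\ target $1$ sending $1+2+4$ while its multiple-of-$8$ traffic shifts by $8$). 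Ruling all of this out, and pinning the direction $2\to 1$ rather than $1\to 2$, is precisely the content of the lemma, and your proposal defers exactly this to unspecified ``exchange arguments''; likewise your lower bound $\lambda(Q)\ge 1+\lambda(P)$ assumes the residual after the size-$1$ step is well-defined and has complexity $\ge\lambda(P)$, which presupposes the structure you are trying to prove.

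For comparison, the paper's proof does not reason level-by-level at all. It splits an arbitrary shortest sequence $s$ into the subsequence $s^L$ of transactions of size at most $4$ and $s^H$ of size at least $8$, reroutes every transaction of $s^L$ through an artificial hub target $\star$, and computes the \emph{net} weight each target exchanges with $\star$, which is $7 \bmod 8$ for target $1$, $1\bmod 8$ for target $2$, and $0 \bmod 8$ otherwise (because applying $s^L$ must leave all weights divisible by $8$). It then replaces each target's traffic with a canonical cheapest form --- multiples of $8$ plus, for targets $1$ and $2$, the single unit $\TRANSACT{2}{1}{\star}\TRANSACT{\star}{1}{1}$ --- using the key count that delivering $7$ with sizes $\le 4$ costs three transactions versus two for ``send $8$, receive $1$''. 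Eliminating $\star$ yields a sequence containing $\TRANSACT{2}{1}{1}$ that is strictly shorter than $s$ unless $s$ already contained it. If you want to complete your route, you would essentially have to reconstruct this normalization; the mod-$8$ net-weight accounting over an arbitrary interleaving of small and large transactions is the missing piece.
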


\begin{proof}
First we prove the basic claims on $Q$: Since $\forall i: p_i > 0 \Rightarrow 8p_i \ge 8 \Rightarrow q_i \ge 7 > 0$. Also, $\sum_i q_i = 8 \sum_i p_i = 8 \cdot 2^W$. It remains to show that every shortest sequence that induces $Q$ contains the transaction \TRANSACT{2}{1}{1}.

Let $s^L$ be the subset of $s$ consisting of all transactions of sizes at most $4$, and let $s^H$ be the subset of $s$ consisting of all transactions of sizes larger than $4$. We break every transaction $\TRANSACT{a}{2^\ell}{b}$ in $s^L$ into two transactions $\TRANSACT{a}{2^\ell}{\star}\TRANSACT{\star}{2^\ell}{b}$ for some artificial target denoted by $\star$. Denote this new sequence by $s^L_\star$. Clearly, $|s^L_\star| = 2|s^L|$. We also order in $s^L_\star$ the transactions in which $\star$ is a receiver before the transactions in which it is a sender. This means that $\star$ aggregates weight from the senders, and then spreads it to the receivers. Since the size of every transaction in $s^H$ is a multiple of $8$, it means that applying $s^L$ to $Q$ makes all the weights divisible by $8$. Therefore, $\star$ aggregates $7$ modulo $8$ from the first target, $1$ modulo $8$ from the second target, and $0$ modulo $8$ from any other target. Spreading this weight is done in multiples of $8$ to any receiver.

We now modify $s^L_\star$ to make it shorter and then eliminate the artificial target. The resulting sequence $s'$ will be strictly shorter than $s^L$ unless $s^L$ contains $\TRANSACT{2}{1}{1}$, and the concatenation of $s'$ and $s^H$ will still induce $Q$. Since we assumed that $s$ is a shortest sequence inducing $Q$, this implies that $s$ must contain $\TRANSACT{2}{1}{1}$. We define the \emph{net weight} delivered from a target $j$ to $\star$ as the sum of the sizes of the transactions in which $j$ is a sender, minus the sum of the sizes of the transactions in which $j$ is a receiver. We modify $s^L_\star$ as follows:
\begin{enumerate}[leftmargin=*,label=(\arabic*),noitemsep]
    \item For $j \ne 1,2$, the net weight delivered is $8m$:
    \begin{enumerate}[leftmargin=*,label=(\alph*),noitemsep]
        \item $m>0$: we replace in $s^L_\star$ all the transactions of the forms $\TRANSACT{j}{2^\ell}{\star}$ and $\TRANSACT{\star}{2^\ell}{j}$ by $m$ transactions $\TRANSACT{j}{8}{\star}$, which is shorter.
        \item $m<0$: we replace in $s^L_\star$ all the transactions of the forms $\TRANSACT{j}{2^\ell}{\star}$ and $\TRANSACT{\star}{2^\ell}{j}$ by $m$ transactions $\TRANSACT{\star}{8}{j}$, which is shorter.
        \item $m=0$: we delete all the transactions of the forms $\TRANSACT{j}{2^\ell}{\star}$ and $\TRANSACT{\star}{2^\ell}{j}$. We reduced the number of transactions if $j$ participates in $s^L_\star$.
    \end{enumerate}
    
    \item For $j = 2$, the net weight delivered is $8m + 1$:
    \begin{enumerate}[leftmargin=*,label=(\alph*),noitemsep]
        \item $m>0$: we replace in $s^L_\star$ all the transactions of the forms $\TRANSACT{j}{2^\ell}{\star}$ and $\TRANSACT{\star}{2^\ell}{j}$ by $m$ transactions $\TRANSACT{j}{8}{\star}$ and a transaction $\TRANSACT{j}{1}{\star}$, which is shorter.
        \item $m<0$: we replace in $s^L_\star$ all the transactions of the forms $\TRANSACT{j}{2^\ell}{\star}$ and $\TRANSACT{\star}{2^\ell}{j}$ by $m$ transactions $\TRANSACT{\star}{8}{j}$ and a transaction $\TRANSACT{j}{1}{\star}$, which is shorter.
        \item $m=0$: we replace in $s^L_\star$ all the transactions of the forms $\TRANSACT{j}{2^\ell}{\star}$ and $\TRANSACT{\star}{2^\ell}{j}$ by the transaction $\TRANSACT{j}{1}{\star}$. We reduced the number of transactions unless this was the only transaction of $j$ in $s^L$ to begin with.
    \end{enumerate}
    
    \item For $j = 1$, the net weight delivered is $8m + 7$:
    \begin{enumerate}[leftmargin=*,label=(\alph*),noitemsep]
        \item $m > -1$: we replace in $s^L_\star$ all the transactions of the forms $\TRANSACT{j}{2^\ell}{\star}$ and $\TRANSACT{\star}{2^\ell}{j}$ by $m+1$ transactions $\TRANSACT{j}{8}{\star}$ and an additional transaction $\TRANSACT{\star}{1}{j}$, which is shorter (sending $7$ explicitly by transactions of size at most $4$ requires three transactions, compared to only two transactions by sending $8$ and receiving $1$).
        \item $m < -1$: we replace in $s^L_\star$ all the transactions of the forms $\TRANSACT{j}{2^\ell}{\star}$ and $\TRANSACT{\star}{2^\ell}{j}$ by $m+1$ transactions $\TRANSACT{\star}{8}{j}$ and an additional transaction $\TRANSACT{\star}{1}{j}$, which is shorter.
        \item $m=-1$: we replace in $s^L_\star$ all the transactions of the forms $\TRANSACT{j}{2^\ell}{\star}$ and $\TRANSACT{\star}{2^\ell}{j}$ by the transaction $\TRANSACT{\star}{1}{j}$. We reduced the number of transactions unless this was the only transaction of $j$ in $s^L$ to begin with.
    \end{enumerate}
\end{enumerate}

The modified sequence contains the transactions $\TRANSACT{2}{1}{\star}$ and $\TRANSACT{\star}{1}{1}$ and the rest of the $\star$ transactions are of size $8$. Therefore we can get rid of $\star$ by repetitively replacing pairs of transactions of the form $\TRANSACT{a}{2^\ell}{\star}\TRANSACT{\star}{2^\ell}{b}$ by $\TRANSACT{a}{2^\ell}{b}$. Let $s'$ be the resulting sequence. Observe that $s'$ contains $\TRANSACT{2}{1}{1}$ and that $|s'| \le |s^L|$ (because $2|s'| \le 2|s^L_\star|$). The concatenation of $s'$ and $s^H$ induces $Q$. Further more, $s'$ is strictly shorter than $s$ unless the only transactions in $s^L_\star$ are $\TRANSACT{2}{1}{\star}\TRANSACT{\star}{1}{1}$.
\end{proof}

\begin{remark}
\label{remark_forced_by_factor_eight}
Note that $8$ is the minimal factor by which we have to scale $P$ for the proof of Lemma~\ref{lemma_forced_transaction} to work. If we tried using $4$ (such that the extra values would be 3 and 1 modulo 4), then it could be that $Q$ is induced by a shortest sequence $s$ that contains $\TRANSACT{2}{1}{1}\TRANSACT{1}{4}{2}$, in which case we can replace these two transactions by \TRANSACT{1}{1}{2}\TRANSACT{1}{2}{2}, which shows that \TRANSACT{2}{1}{1} is not forced. As a concrete example, consider the partition $P = [5,3]$. It can be induced with three transactions either by $\TRANSACT{2}{1}{1}\TRANSACT{2}{2}{1}\TRANSACT{1}{8}{0}$ or by $\TRANSACT{1}{1}{2}\TRANSACT{2}{4}{1}\TRANSACT{1}{8}{0}$.
\end{remark}

Now we extend this method of forcing a single transaction to force many.

\begin{theorem}[Forced Sequence]
\label{theorem_forced_unique_sequence_contruction}
Let $A = [(i_1,j_1),(i_2,j_2),\ldots,(i_N,j_N)]$ be a list of pairs of indices, between $1$ and $k$, and let $P^0$ be an arbitrary partition of $2^{W_0}$ into $k$ positive parts. Then there exists a partition $P$ of $2^{W_0 + 3N}$ into $k$ positive parts such that any shortest sequence that induces $P$ contains the transactions $\TRANSACT{i_m}{2^{3(m-1)}}{j_m}$ for $m=1,\ldots,N$, and after these $N$ transactions are applied the partition becomes $2^{3N} \cdot P^0$.
\end{theorem}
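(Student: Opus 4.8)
The plan is to prove the statement by induction on $N$, invoking Lemma~\ref{lemma_forced_transaction} exactly once per step. For $N=0$ we take $P=P^0$, and there is nothing to force. Throughout I will assume $i_m\ne j_m$ for all $m$ (if $i_m=j_m$ then $\TRANSACT{i_m}{2^{3(m-1)}}{j_m}$ is a no-op that no shortest sequence ever contains, so such pairs can simply be deleted from $A$).

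For the induction step, write $A'=[(i_2,j_2),\ldots,(i_N,j_N)]$ and let $P'$, a positive partition of $2^{W_0+3(N-1)}$, be what the induction hypothesis gives for $A'$ and $P^0$: every shortest sequence of $P'$ contains $\TRANSACT{i_\ell}{2^{3(\ell-2)}}{j_\ell}$ for $\ell=2,\ldots,N$, and applying these $N-1$ transactions to $P'$ yields $2^{3(N-1)}P^0$. Now apply Lemma~\ref{lemma_forced_transaction} to $P'$, after renaming the targets called ``$2$'' and ``$1$'' in that lemma to $i_1$ and $j_1$. This produces a positive partition $P$ of $2^{W_0+3N}$ such that (i) every shortest sequence of $P$ contains $\TRANSACT{i_1}{1}{j_1}$, and (ii) applying this single transaction to $P$ gives $8P'$ (the lemma adds $1$ to coordinate $i_1$ and subtracts $1$ from coordinate $j_1$ on top of scaling $P'$ by $8$). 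From (i)--(ii) one gets $\lambda(P)=\lambda(8P')+1$: prepending $\TRANSACT{i_1}{1}{j_1}$ to a shortest sequence of $8P'$ gives the upper bound, and deleting the forced transaction $\TRANSACT{i_1}{1}{j_1}$ from any shortest sequence of $P$ gives a length-$(\lambda(P)-1)$ sequence that zeroes $8P'$, hence the lower bound; moreover that deletion always leaves a \emph{shortest} sequence of $8P'$.

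What remains is to pull the induction hypothesis back from $P'$ to $8P'$. For this I would establish a scaling fact: for every partition $R$ of $2^W$ and every $h\ge 1$, $\lambda(2^hR)=\lambda(R)$ and every shortest sequence of $2^hR$ is obtained from a shortest sequence of $R$ by multiplying all transaction sizes by $2^h$. Granting it, take any shortest sequence $s$ of $P$; by (i) we may delete $\TRANSACT{i_1}{1}{j_1}$ from it, obtaining a shortest sequence of $8P'$, which the scaling fact writes as $8s'$ for a shortest sequence $s'$ of $P'$; by the induction hypothesis $s'$ contains $\TRANSACT{i_\ell}{2^{3(\ell-2)}}{j_\ell}$ for $\ell=2,\ldots,N$, so $s$ contains $\TRANSACT{i_\ell}{2^{3(\ell-1)}}{j_\ell}$ for $\ell=2,\ldots,N$ together with $\TRANSACT{i_1}{2^0}{j_1}$, i.e.\ all $N$ prescribed transactions. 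The closing identity is then just arithmetic: subtracting $\TRANSACT{i_1}{1}{j_1}$ from $P$ gives $8P'$, and subtracting the other $N-1$ transactions from $8P'$ gives $8$ times the partition obtained from $P'$ by subtracting $\TRANSACT{i_\ell}{2^{3(\ell-2)}}{j_\ell}$ for $\ell=2,\ldots,N$, which by the induction hypothesis equals $8\cdot 2^{3(N-1)}P^0=2^{3N}P^0$.

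The main obstacle is the scaling fact. The bound $\lambda(2^hR)\le\lambda(R)$ is immediate (scale every transaction of a shortest sequence of $R$). For the converse it suffices to treat $h=1$ and iterate, so I would show that a shortest sequence $s$ of $2R$ can use only even transaction sizes. Otherwise, let $O\subseteq s$ be its nonempty multiset of size-$1$ transactions, regarded as a loopless directed multigraph on $\{1,\ldots,k\}$. Since $2R$ has only even weights and every transaction of $s\setminus O$ has even size, the net contribution of $O$ at every target is even, so every vertex of this multigraph has even degree; hence $O$ either contains a directed path $u\to v\to w$ (and then $\TRANSACT{u}{1}{v}$ followed by $\TRANSACT{v}{1}{w}$ can be merged into the single size-$1$ transaction $\TRANSACT{u}{1}{w}$) or splits into pure even-degree sources and sinks (and then the net effect of $O$ can be reproduced with half as many transactions, of size $2$). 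Either way $s$ strictly shortens, a contradiction; so every size in $s$ is even, and dividing by $2$ yields a shortest sequence of $R$ by the same $\lambda$-bookkeeping as above. The one routine point I would still need to verify is that the rewritten sequences can be reordered to stay admissible, exactly the kind of bookkeeping already carried out in the proof of Lemma~\ref{lemma_forced_transaction}.
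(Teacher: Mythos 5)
Your proof follows essentially the same route as the paper's: induction on $N$, invoking Lemma~\ref{lemma_forced_transaction} once per step with its targets renamed to $(i_1,j_1)$, deleting the forced size-$1$ transaction from a shortest sequence of $P$ to obtain a shortest sequence of $8P'$, and pulling this back to $P'$ via the induction hypothesis. The only difference is that you explicitly isolate and sketch the scaling fact (shortest sequences of $2^hR$ are exactly the $2^h$-scalings of shortest sequences of $R$), which the paper uses implicitly when it ``divides the sizes of the transactions by $8$''; your parity/multigraph argument for it is sound (modulo the reordering bookkeeping you flag, which the paper's own Lemma~\ref{lemma_forced_transaction} treats with the same looseness), so this is added justification rather than a different approach.
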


\begin{proof}
We construct $P$ by induction on $N$. The base case is for $N=0$: Since $A = \emptyset$, $P = P^0$ and the claim is trivially true.

For $N > 0$, let $P^{N-1}$ be a partition of $2^{W_0 + 3(N-1)}$ constructed for $P^0$ and the last $N-1$ pairs in $A$ by the induction hypothesis. We define $P^{N}$ as follows: $P^{N}_{i_1} = 8 P^{N-1}_{i_1}+1$, $P^{N}_{j_1} = 8 P^{N-1}_{j_1}-1$ and $\forall i \ne i_1,j_1: P^{N}_{i} = 8 P^{N-1}_{i}$. By Lemma~\ref{lemma_forced_transaction}: $P^N$ is a partition into positive parts, and $\TRANSACT{i_1}{1}{j_1}$ is forced in any shortest sequence that induces $P^N$. Consider a shortest sequence $s$ that induces $P^N$. After performing $\TRANSACT{i_1}{1}{j_1}$, $P^N$ becomes $8 \cdot P^{N-1}$. It follows that if we omit $\TRANSACT{i_1}{1}{j_1}$ from $s$ and divide the sizes of the transactions by $8$ we obtain a shortest sequence $s'$ that induces $P^{N-1}$ and vice versa. By induction, $s'$ must contain $\TRANSACT{i_m}{2^{3(m-2)}}{j_m}$ for $2 \le m \le N$.\footnote{Notice that the sizes of the transactions are $2^{3(m-2)}$ rather than $2^{3(m-1)}$ since for $s'$ the first pair is $(i_2,j_2)$, and we divided by $8$.} Therefore $s$ must contain $\TRANSACT{i_m}{2^{3(m-1)}}{j_m}$ for $2 \le m \le N$, and it also contains $\TRANSACT{i_m}{2^{3(m-1)}}{j_m}$ for $m=1$. Finally, since after performing all transactions in $s'$ between $i_m$ and $j_m$ for $2 \le m \le N$ we get $2^{3(N-1)} \cdot P^0$, it follows that after performing all transactions in $s$ corresponding to pairs in $A$ we get $2^{3N} \cdot P^0$.
\end{proof}

\begin{remark}
\label{remark_force_pairs_from_sequence}
Note that Theorem~\ref{theorem_forced_unique_sequence_contruction} gets as ``input'' pairs of indices. This means that we can take any sequence, in any order, consider its transactions as pairs, and then construct a new partition such that every shortest sequences that induces it must have transactions between these specific pairs. Loosely speaking, if the transactions are ordered in increasing sizes, we take the partition induced by the sequence and ``stretch'' its bits such that we get a partition that must have these particular transactions in any shortest sequence.
\end{remark}

\begin{corollary}
\label{corollary_induced_clique_graph}
There exists a partition $P$ such that every shortest sequence that induces $P$ has a transactions-graph which is a clique.
\end{corollary}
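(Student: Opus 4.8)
The plan is to invoke Theorem~\ref{theorem_forced_unique_sequence_contruction} with a list $A$ that contains \emph{every} unordered pair of distinct colors. Concretely, fix $k$, let $A = [(i_1,j_1),\ldots,(i_N,j_N)]$ be an enumeration (in any order) of all $N = \binom{k}{2}$ pairs $\{i,j\}$ with $1 \le i < j \le k$, and let $P^0$ be an arbitrary partition of $2^{W_0}$ into $k$ positive parts (e.g.\ $W_0 = \ceil{\lg k}$). Theorem~\ref{theorem_forced_unique_sequence_contruction} then produces a partition $P$ of $2^{W_0 + 3N}$ into $k$ positive parts such that every shortest sequence $s$ inducing $P$ contains the $N$ transactions $\TRANSACT{i_m}{2^{3(m-1)}}{j_m}$, and applying exactly these $N$ transactions to $P$ yields $2^{3N}\cdot P^0$. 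Since the $m$-th forced transaction joins colors $i_m$ and $j_m$, these $N$ transactions together touch all $\binom{k}{2}$ pairs of colors.

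Fix any shortest sequence $s$ inducing $P$. If all $N$ forced transactions belong to the set $A_L$ of Definition~\ref{definition_graph_of_transactions}, then the transactions-graph of $s$ has all $\binom{k}{2}$ edges, i.e.\ it is the complete graph $K_k$, which is a clique; as $s$ was arbitrary this proves the corollary. The forced transactions have sizes $2^0, 2^3, \ldots, 2^{3(N-1)}$, all $\le 2^{3(N-1)}$, so it suffices to show $L \ge 3(N-1)$ --- equivalently, that applying $A_{3(N-1)}$, the set of all transactions of $s$ of size at most $2^{3(N-1)}$, to $P$ keeps every weight positive.

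The crux is to identify $A_{3(N-1)}$ exactly. By the inductive argument underlying Theorem~\ref{theorem_forced_unique_sequence_contruction} (which at each step removes the forced size-$1$ transaction and divides all remaining sizes by $8$, using that in a shortest sequence every other transaction has size divisible by $8$ --- this is the strong form of Lemma~\ref{lemma_forced_transaction}, where the \emph{only} transaction of size $\le 4$ in a shortest sequence is the forced one), every transaction of $s$ other than the $N$ forced ones has size divisible by $2^{3N}$, hence size $\ge 2^{3N} > 2^{3(N-1)}$. Therefore $A_{3(N-1)}$ is exactly the set of $N$ forced transactions, and applying it to $P$ gives $2^{3N}\cdot P^0$, each of whose coordinates is at least $2^{3N} > 0$. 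Hence $L \ge 3(N-1)$, so $A_L$ contains all $N$ forced transactions and the transactions-graph of $s$ is $K_k$.

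\textbf{Main obstacle.} The delicate point is precisely the identification $A_{3(N-1)} = \{\,\TRANSACT{i_m}{2^{3(m-1)}}{j_m} : 1 \le m \le N\,\}$: the transactions-graph is read off from $A_L$ for the \emph{actual} threshold $L$ of the whole sequence $s$, so one must rule out ``spurious'' small transactions in $s$ beyond the forced ones, which is exactly what the structure of the forced-sequence construction provides. Once that is in hand, everything else is immediate from Definition~\ref{definition_graph_of_transactions} and Theorem~\ref{theorem_forced_unique_sequence_contruction}.
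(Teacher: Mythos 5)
Your proposal is correct and follows essentially the same route as the paper: instantiate Theorem~\ref{theorem_forced_unique_sequence_contruction} with $A$ equal to all $\binom{k}{2}$ pairs and observe that after the forced transactions the partition is $2^{3|A|}\cdot P^0$, hence all weights stay positive and every pair contributes an edge. Your explicit verification that $A_{3(N-1)}$ contains only the forced transactions (via the strong form of Lemma~\ref{lemma_forced_transaction}, i.e.\ all other transactions in a shortest sequence have size divisible by $2^{3N}$) is a detail the paper leaves implicit, and it is a correct reading of the inductive argument.
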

\begin{proof}
Let $A = \{(i,j) | 1 \le j < i \le k \}$, and let $P^0$ be an arbitrary partition into $k$ positive parts. By Theorem~\ref{theorem_forced_unique_sequence_contruction} there exists a partition $P$ such that every shortest sequence that induces $P$
has a transaction of size smaller than $2^{3|A|}$ between every $(i,j) \in A$. Furthermore, after we apply these transactions $P$ becomes $2^{3|A|} \cdot P^0$. This means that the transactions-graph of each such sequence is a clique. Different choices of $P^0$ yield different partitions that satisfy the claim.
\end{proof}

\subsection{Concluding The Proof}
\label{subsection_fragmentation_conclusion}

\begin{definition}[TCAM Fragmentation]
For any partition $P$ with $k$ parts, let $T$ be a TCAM that induces it. Let $m_i(T)$ denote the number of segments allocated to $p_i$ by this TCAM, and let $M(T) = max_{i\in[k]}{m_i}$. In words, $M(T)$ is the maximum fragmentation of a target.
\end{definition}

\begin{theorem}[Fragmentation versus TCAM Size]
\label{theorem_minimum_fragmentation_bound}
There exists a partition $P$ such that for any TCAM $T$ that induces $P$, either $M(T) \ge \frac{k+1}{4} + \frac{1}{2k}$ or $|T| > \lambda(P)$. In words: either the fragmentation of $P$ is large, or the number of rules is not optimal.
\end{theorem}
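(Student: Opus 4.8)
The plan is to combine the two ``halves'' already established: Corollary~\ref{corollary_induced_clique_graph}, which produces a partition whose every shortest sequence has a clique transactions-graph, and Corollary~\ref{corollary_minimum_clique_fragmentation}, which says that a minimum-conflict coloring whose transactions-graph is a clique must fragment some color into at least $\frac{k+1}{4} + \frac{1}{2k}$ segments. So first I would simply \emph{take} $P$ to be the partition guaranteed by Corollary~\ref{corollary_induced_clique_graph} (for an arbitrary choice of the seed partition $P^0$ into $k$ positive parts; different seeds give infinitely many witnesses, which is all we need since the statement is existential). Note that $P$ has $k$ strictly positive parts, so every TCAM realizing $P$ uses all $k$ colors, which is the hypothesis needed later.

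Next I would fix an arbitrary TCAM $T$ that induces $P$ and split into two cases. If $|T| > \lambda(P)$, the second disjunct of the theorem holds and there is nothing more to do. Otherwise $|T| \le \lambda(P)$, and since $\lambda(P)$ is by Definition~\ref{definition_length_of_partition} the size of the \emph{smallest} LPM TCAM realizing $P$, we must have $|T| = \lambda(P)$; in particular $T$ has no redundant rules, and peeling rules off $T$ one at a time yields a sequence $s$ of transactions of length exactly $\lambda(P)$ that zeroes $P$, i.e.\ a \emph{shortest} sequence inducing $P$.

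The heart of the argument is then the identification that lets both corollaries act on the same object. The TCAM $T$ determines a full trie coloring $\chi^*$ extending the leaf coloring $\chi$ in which exactly $p_i$ leaves carry color $i$, and the number of conflicts of $\chi^*$ equals $|T| = \lambda(P)$. If $\chi^*$ were \emph{not} a minimum-conflict extension of $\chi$, a better extension would translate into a smaller TCAM realizing the same leaf partition, hence the same $P$, contradicting the minimality of $\lambda(P)$; so $\chi^*$ is a minimum-conflict extension of $\chi$. Moreover the sequence corresponding to $\chi^*$ (Remark~\ref{remark_graph_of_coloring}) is exactly the shortest sequence $s$ above, so by the choice of $P$ its transactions-graph is a clique. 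Now Corollary~\ref{corollary_minimum_clique_fragmentation} applies verbatim to $\chi^*$ and gives that, in the restriction of $\chi^*$ to the leaves, some color is broken into at least $\frac{k+1}{4} + \frac{1}{2k}$ monochromatic segments. Since the number of leaf-segments of color $i$ is precisely $m_i(T)$, this says $M(T) \ge \frac{k+1}{4} + \frac{1}{2k}$, which is the first disjunct.

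The step I expect to be the only real obstacle is the bridging observation in the last paragraph: carefully justifying that when $|T| = \lambda(P)$ the coloring $\chi^*$ induced by $T$ is genuinely a minimum-conflict extension of its own leaf coloring, and that the associated transaction sequence is genuinely a shortest sequence for $P$, so that the hypotheses of Corollary~\ref{corollary_minimum_clique_fragmentation} and of Corollary~\ref{corollary_induced_clique_graph} are met simultaneously on the same combinatorial object. Everything else is bookkeeping (matching $m_i(T)$ with the leaf-segment count of color $i$, and observing $|T| \ge \lambda(P)$ always).
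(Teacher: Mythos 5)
Your proposal is correct and follows essentially the same route as the paper: take the partition from Corollary~\ref{corollary_induced_clique_graph} and apply Corollary~\ref{corollary_minimum_clique_fragmentation} to any minimal TCAM for it. The bridging step you flag (that a TCAM of size $\lambda(P)$ yields a minimum-conflict extension of its own leaf coloring whose associated sequence is shortest, so the clique hypothesis applies) is exactly what the paper's terser proof leaves implicit, and your justification of it is sound.
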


\begin{proof}
By Corollary~\ref{corollary_induced_clique_graph} there exists a partition $P$ such that the transactions-graph of any shortest sequence that induces $P$ is a clique. By Corollary~\ref{corollary_minimum_clique_fragmentation} it follows that a smallest TCAM $T^*$ for $P$ has at least one color that is fragmented to at least $\frac{k+1}{4} + \frac{1}{2k}$ segments.
Therefore, either we realize $P$ with $|T^*| = \lambda(P)$ rules and fragmentation of at least $\frac{k+1}{4} + \frac{1}{2k}$, or we use a TCAM $T$ with smaller fragmentation but then $|T| > \lambda(P)$.

Note that the statement for $k=1,2$ is trivial, because then $\frac{k+1}{4} + \frac{1}{2k} = 1$ and therefore the part $M(T) \ge 1$ in the statement is always satisfied.
\end{proof}

To make Theorem~\ref{theorem_minimum_fragmentation_bound} more concrete, observe that it generalizes Theorem~\ref{theorem_minimal_example}. In this case, the desired fragmentation is $1$, but for $k\ge 3$ we have $\frac{k+1}{4} + \frac{1}{2k} > 1$, so there exists a partition into $k=3$ parts such that in every smallest TCAM representing it, some target is fragmented. Theorem~\ref{theorem_minimum_fragmentation_bound} implies that an unavoidable maximum fragmentation increase by $1$ for every four additional parts in the (worst-case) partition.

Note that while the proof deals with a clique in the transactions-graph of every shortest sequence, this condition is sufficient but not necessary. For example, Figure~\ref{figure_transactions_graph} shows the transactions-graph of the minimal example $P = [13,13,6]$, in which we do not have a cycle (a $3$-clique). The third transaction of the sequence would have closed the cycle, but it zeroes one of the weights, and therefore does not contribute an edge to the transactions-graph.

We remark that the construction in Theorem~\ref{theorem_forced_unique_sequence_contruction} requires $3$ bit-levels per forced transaction. Therefore to have $\binom{k}{2}$ forced transactions the total width would be $W = 3\binom{k}{2} + \ceil{\lg k}$. The additional $\ceil{\lg k}$ levels are because all the weights are still non-zero after performing the transactions of the lowest $3\binom{k}{2}$ levels. While such a large $W$ is unrealistic for practical scenarios, this proves our theorem.

We can use fewer levels, by utilizing a level to induce several edges in the transactions-graph simultaneously. For example, if we begin with all weights odd, then level-$0$ induces $k/2$ edges at once. It is also possible to force a transaction sometimes with only 2 bit-levels as we show below. By using these optimizations it may be possible to reduce the width $W$ of the construction, but it should be done carefully without introducing unintended freedom that could break the enforcement of specific edges in the transactions-graph. In addition, the fact that a clique in the transactions-graph may not be necessary can also help reducing the number of levels.

As an example where shaving-off levels is possible, consider the partition $P = [12,49,195]$, whose Bit Matcher sequence is $s = \TRANSACT{2}{1}{3}\TRANSACT{3}{4}{1}\TRANSACT{1}{16}{2}\TRANSACT{2}{64}{3}\TRANSACT{3}{256}{0}$. One can verify that it uses only two levels per transaction, and that it cannot be realized optimally with a single segment per target as it requires six transactions for that, compared to only five transactions by Bit Matcher. The transactions here are in fact forced, even though by Remark~\ref{remark_forced_by_factor_eight} forcing using only two levels per transaction is generally not guaranteed.

\section{Looking for the Best Order of Segments}
\label{section_looking_for_best_order_of_segments}

In this section we revisit the simpler case of partitioning the address space with a single consecutive segment of addresses per target. Algorithm~\ref{alg_consecutive_ranges} assumes that the order of the segments is known, however it doesn't help us figure out this order. A natural question to ask is whether we can determine this order. In this sub-section we provide some partial guarantees that show that an arbitrary order is a $\min(k-1,W-\floor{\lg k} + 1)$ approximation compared to the best ordering and that by choosing a non-arbitrary ordering the gap may be slightly improved to $\min(\frac{k+1}{3},W-\floor{\lg k} + 1)$ (Theorem~\ref{theorem_approximation_segments_order}). We also demonstrate a partition in which the gap between the best and worst orderings is large (Theorem~\ref{theorem_approximation_segments_order_lower_bound_k_or_w}). Finally, we describe a seemingly natural greedy algorithm and explain why it may not work well. We emphasize that this section assumes that the desired partition $P$ to $k$ part must be realized as $k$ segments, one per part, at the cost of possibly more than $\lambda(P)$ rules.

\begin{definition}
Let $P$ be a partition of $2^W$ into $k$ parts, and let $\sigma \in S_k$ denote the permutation on the order of the parts. We denote by $N(P,\sigma)$ number of conflicts in the coloring produced by Algorithm~\ref{alg_consecutive_ranges} when run on $P$ in the order of $\sigma$.
\end{definition}

We begin with a ``warm-up'' approximation bound.
\begin{lemma}
\label{lemma_any_order_w_approximation}
Let $P$ be a partition of $2^W$ into $k$ parts, and let $\sigma^* \in S_k$ denote a permutation that minimizes $N(P,\sigma)$. Then for any $\sigma \in S_k$: $N(P,\sigma) < (W-\floor{\lg k} + 1) \cdot N(P,\sigma^*)$.
\end{lemma}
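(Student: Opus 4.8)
The plan is to sandwich the ratio $N(P,\sigma)/N(P,\sigma^*)$ between a universal upper bound on the numerator and a universal lower bound on the denominator. For the upper bound nothing new is needed: by Theorem~\ref{theorem_table_size_consecutive_ranges}, running Algorithm~\ref{alg_consecutive_ranges} on $P$ in any order $\sigma$ produces a coloring with $N(P,\sigma) \le (W-\floor{\lg k}+1)(k-1)+1$ conflicts. So it remains to lower bound $N(P,\sigma^*)$, and for this I would actually prove the stronger fact that $N(P,\sigma) \ge k$ for \emph{every} $\sigma$.

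The lower bound is the substantive step. Since $P$ has $k$ positive parts, the leaf-coloring fed to Algorithm~\ref{alg_consecutive_ranges} uses all $k$ colors, and hence so does the resulting trie coloring $\chi$. The root is always a conflict by Definition~\ref{definition_color_conflict}. For each of the $k-1$ colors $c$ that differ from the root's color, choose a node of color $c$ of minimal depth; it is not the root, its parent is therefore colored differently from $c$, so it is in conflict, and distinct colors yield distinct such nodes, all different from the root. This exhibits $1+(k-1)=k$ conflicts, so $N(P,\sigma)\ge k$ and in particular $N(P,\sigma^*)\ge k$.

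Combining the two bounds, write $c:=W-\floor{\lg k}+1$; then for any $\sigma$
\[
\frac{N(P,\sigma)}{N(P,\sigma^*)} \;\le\; \frac{c(k-1)+1}{k} \;=\; c-\frac{c-1}{k} \;<\; c ,
\]
where the strict inequality uses $c>1$. Since $k\le 2^W$ we have $\floor{\lg k}\le W$ with equality only when $k=2^W$, so $c\ge 2$ in all remaining cases; when $k=2^W$ every part is a singleton and the leaf-coloring — hence $N(P,\cdot)$ — does not depend on $\sigma$ at all, so that degenerate instance is handled separately (the inequality should be read as non-strict there). This yields $N(P,\sigma) < (W-\floor{\lg k}+1)\,N(P,\sigma^*)$. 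The only real care point is the choice of lower bound: a bound of $k-1$ on $N(P,\sigma^*)$ would only give a factor $c+\tfrac{1}{k-1}>c$, so it is essential to count the root conflict together with the $k-1$ conflicts forced by the remaining colors; everything else is bookkeeping against the already-established per-level count.
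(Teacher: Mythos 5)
Your proof is correct and follows essentially the same route as the paper: the per-level upper bound $N(P,\sigma)\le (W-\floor{\lg k}+1)(k-1)+1$ from Theorem~\ref{theorem_table_size_consecutive_ranges}, combined with the lower bound $N(P,\sigma^*)\ge k$ (the paper phrases this as $k\le\lambda(P)\le N(P,\sigma^*)$, but the content --- one conflict per color plus the root --- is the same). Your explicit treatment of the degenerate case $k=2^W$, where the factor equals $1$ and strictness fails, is a point the paper's proof silently glosses over, so that remark is a small bonus rather than a deviation.
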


\begin{proof}
By Theorem~\ref{theorem_table_size_consecutive_ranges}, $N(P,\sigma) \le (W-\floor{\lg k} + 1) (k-1) + 1$, and we can relax this to $N(P,\sigma) < (W-\floor{\lg k} + 1) k$. Note also that $k \le \lambda(P) \le N(P,\sigma^*)$ since each color must have at least one conflict with some parent or as the root. Putting both together, we get $N(P,\sigma) < (W-\floor{\lg k} + 1) \cdot N(P,\sigma^*)$.
\end{proof}

The following approximation bound is less trivial.

\begin{lemma}
\label{lemma_any_order_k_approximation}
Let $P$ be a partition of $2^W$ into $k$ parts and let $\sigma^* \in S_k$ denote a permutation that minimizes $N(P,\sigma)$. Then for any $\sigma \in S_k$: $N(P,\sigma) \le N(P,\sigma^*) \cdot (k-1) - \frac{(k-2)(k+1)}{2}$.
\end{lemma}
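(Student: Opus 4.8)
The plan is to bound $N(P,\sigma)$ for an arbitrary order by the sum of the complexities of the $k-1$ two‑part coarsenings cut out by its segment boundaries, and then to use the fact that those complexities cannot all be close to $\lambda(P)$ simultaneously. Fix $\sigma$, let $x_1,\dots,x_k$ be the segment lengths in the order of $\sigma$, and for $1\le j\le k-1$ let $b_j=x_1+\dots+x_j$ be the $j$-th cumulative boundary and $Q_j=[\,b_j,\ 2^W-b_j\,]$ the two-part partition obtained by collapsing everything to the left of that boundary into one part and everything to the right into another. The first step is the inequality $N(P,\sigma)\le 1+\sum_{j=1}^{k-1}\bigl(\lambda(Q_j)-1\bigr)$. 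I would prove it by writing $N(P,\sigma)=1+\sum_{\ell=0}^{W-1}n_\ell$, where $n_\ell$ is the number of conflicts of the coloring produced by Algorithm~\ref{alg_consecutive_ranges} at trie level $\ell$: since the $k$ colors are distinct segments, exactly $k-1$ color boundaries persist at every level, so each level-$\ell$ conflict is charged to the unique boundary it sits next to. A boundary at position $b_j$ can be active (contribute a conflict) at no more levels than it would if it were the only boundary present, since neighbouring segments can only merge colors and thereby remove conflicts, never add them (a monotonicity that follows from the optimality of Algorithm~\ref{alg_consecutive_ranges}, Theorem~\ref{theorem_segments_optimality_algorithm}), and by Theorem~\ref{theorem_consecutive_k2} a lone boundary at $b_j$ is active at exactly $\lambda(Q_j)-1$ levels. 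Summing over $j$ gives the bound.

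The second step is that each $Q_j$ is a coarsening of $P$: relabelling $\sigma(1),\dots,\sigma(j)$ as one color and $\sigma(j+1),\dots,\sigma(k)$ as another turns any TCAM for $P$ into one for $Q_j$, so $\lambda(Q_j)\le\lambda(P)$; and since a single-segment-per-target realization is in particular a realization, $\lambda(P)\le N(P,\sigma^*)$. Hence $\lambda(Q_j)\le N(P,\sigma^*)$ for all $j$, which by the first step already gives the weaker bound $N(P,\sigma)\le(k-1)N(P,\sigma^*)-(k-2)$.

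To reach the stated bound I would prove the sharper claim that the multiset $\{\lambda(Q_1),\dots,\lambda(Q_{k-1})\}$ is \emph{spread out}: sorted in non-increasing order, its $r$-th element is at most $\lambda(P)-r+1$ (equivalently, for every $t\ge 1$ at most $t$ boundaries have coarsening-complexity $\ge\lambda(P)-t+1$). Granting this and $k\le\lambda(P)\le N(P,\sigma^*)$ (the lower bound $k\le\lambda(P)$ as in the proof of Lemma~\ref{lemma_any_order_w_approximation}), we obtain
\[
N(P,\sigma)\ \le\ 1+\sum_{r=1}^{k-1}\bigl((\lambda(P)-r+1)-1\bigr)\ =\ 1+(k-1)\lambda(P)-\frac{k(k-1)}{2}\ \le\ (k-1)N(P,\sigma^*)-\frac{(k-2)(k+1)}{2},
\]
which is the claim. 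The spread property I would extract from a shortest sequence $s$ for $P$: if $t+1$ boundaries each had coarsening-complexity $\ge\lambda(P)-t+1$, then the $(t+2)$-part coarsening of $P$ keeping exactly those boundaries has complexity $\le\lambda(P)$, yet each of the $t+1$ corresponding two-part coarsenings is a coarsening of it, and tracking how $s$ realizes them one would force at least one of the $t+1$ complexities down to $\le\lambda(P)-t$.

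The main obstacle is precisely this spread property. The uniform estimate $\lambda(Q_j)\le\lambda(P)$ is immediate but discards the whole arithmetic-progression gain, while the tempting strengthening ``merging $m$ parts lowers $\lambda$ by at least $m-1$'' is \emph{false} --- e.g.\ merging the two small parts of $[5,2,1]$ gives $[5,3]$ with $\lambda([5,3])=\lambda([5,2,1])=3$ --- so the savings must be accounted for globally across all coarsenings at once, not one merge at a time. A secondary delicate point is the charging in the first step: one must verify that the level-$\ell$ conflicts of Algorithm~\ref{alg_consecutive_ranges} are assigned injectively to boundaries and that two neighbouring segments never \emph{create} a conflict a lone boundary would avoid, the awkward case being an interior singleton segment that touches two boundaries at once.
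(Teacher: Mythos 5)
Your route is genuinely different from the paper's proof, which starts from a coloring with $\lambda(P)$ conflicts and restores the order $\sigma$ top-down by shifting whole subtrees, bounding how much each original conflict can inflate. Your first step, $N(P,\sigma)\le 1+\sum_{j=1}^{k-1}(\lambda(Q_j)-1)$, is in fact a true inequality, though not for the reason you give: the ``monotonicity'' claim that a boundary with neighbours is active at no more levels than a lone boundary is unsubstantiated, since nearby segments change the $4$-cousin patterns and hence the rounding decisions Algorithm~\ref{alg_consecutive_ranges} makes at that boundary (a swallowed middle segment can push a merged boundary to an odd position it would have avoided alone). The inequality can be rescued without any charging: for each $j$ take an optimal two-coloring $\chi_j$ of the whole trie extending the two-segment leaf coloring cut at $b_j$ (it has $\lambda(Q_j)$ conflicts by Theorem~\ref{theorem_consecutive_k2}), color each node $v$ by $c_i$ with $i-1=\#\{j:\chi_j(v)=R\}$, observe that this extends the $\sigma$-ordered leaf coloring and that every non-root conflict edge of the combined coloring is a conflict edge of some $\chi_j$, and invoke Theorem~\ref{theorem_segments_optimality_algorithm}. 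Combined with $\lambda(Q_j)\le\lambda(P)\le N(P,\sigma^*)$ this gives the correct but weaker bound $N(P,\sigma)\le (k-1)N(P,\sigma^*)-(k-2)$.

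The genuine gap is the ``spread'' property, which carries all of the remaining quantitative content and which you explicitly do not prove. The sketch (``tracking how $s$ realizes them would force one of the $t+1$ complexities down'') is not an argument: in a shortest sequence for $P$ a single transaction can cross up to $k-1$ boundaries at once (e.g.\ a transaction between the first and last segment, or a transaction into target $0$), so crossing counts only give $\sum_j(\lambda(Q_j)-1)\le (k-1)\lambda(P)$, nowhere near the arithmetic-progression decay $\lambda(Q_{(r)})\le\lambda(P)-r+1$ you need; and, as you note yourself, the local strengthening ``merging $m$ parts drops $\lambda$ by $m-1$'' is false. Even the first nontrivial instance ($k=3$, $t=1$: at most one of $\lambda([x_1,x_2+x_3])$, $\lambda([x_1+x_2,x_3])$ can equal $\lambda(P)$) does not follow from anything written: counting, for each part, the transactions touching it in a shortest sequence only excludes this when the middle part's own two-part complexity exceeds $2$, i.e.\ it leaves open precisely the case where $x_2$ is (essentially) a power of two. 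So as it stands your argument proves only the weaker bound above and does not reach the constant $-\frac{(k-2)(k+1)}{2}$; the paper gets it by a different mechanism, showing the $i$-th original conflict inflates by at most $\max(0,i-2)$ for $i\le k$ and by at most $k-2$ afterwards, which sums directly to the claimed expression.
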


\begin{figure}[t!]
    \centering
    \subfigure[Before new level: all in order]{
    \includegraphics[width=0.3\linewidth]{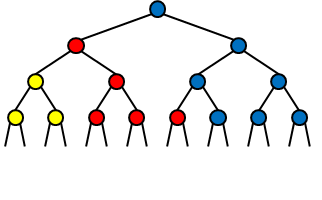}
    \label{figure_shift_subtrees_example_pre}
    }
    ~
    \subfigure[Begin new level: initial state]{
    \includegraphics[width=0.3\linewidth]{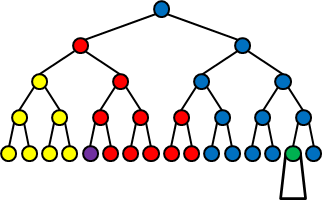}
    \label{figure_shift_subtrees_example_0}
    }
    ~
    \subfigure[Green shift 1]{
    \includegraphics[width=0.3\linewidth]{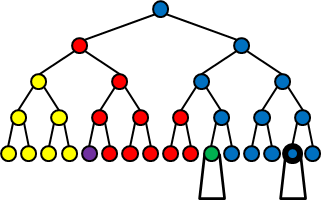}
    \label{figure_shift_subtrees_example_1}
    }
    ~
    \subfigure[Green shift 2]{
    \includegraphics[width=0.3\linewidth]{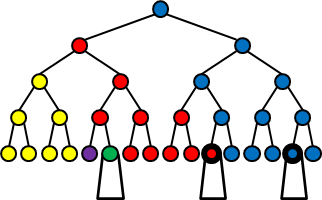}
    \label{figure_shift_subtrees_example_2}
    }
    ~
    \subfigure[Green shift 3]{
    \includegraphics[width=0.3\linewidth]{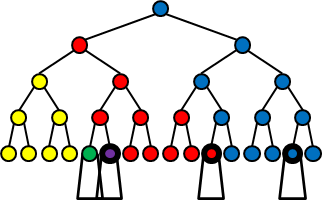}
    \label{figure_shift_subtrees_example_3}
    }
    ~
    \subfigure[Green shift 4]{
    \includegraphics[width=0.3\linewidth]{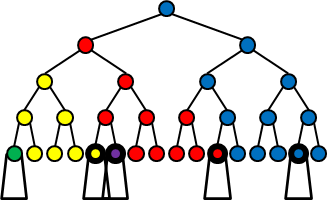}
    \label{figure_shift_subtrees_example_4}
    }

  \caption{Fixing the order of colors as explained in Lemma~\ref{lemma_any_order_k_approximation}. Assume that the desired order of the colors is [green, yellow, red, purple, blue], and that the first three levels (top-down) have been organized correctly [yellow, red, blue] (green and purple are missing), in \ref{figure_shift_subtrees_example_pre}. The initial state of the next level is given in \ref{figure_shift_subtrees_example_0}, and new green and purple nodes are out-of-place. The next steps \ref{figure_shift_subtrees_example_1}-\ref{figure_shift_subtrees_example_4} demonstrate how we fix the location of the green subtree. Doing so shifts $4$ other subtrees in the process, all of them are marked in bolded outline and a trapezoid. Note that on \ref{figure_shift_subtrees_example_1} no extra conflict is generated since the green node is moved (with its subtree) inside its parent color, and that on \ref{figure_shift_subtrees_example_3} no new conflict is generated because the purple node is itself in conflict. After the green node is set into its place, another shifting process can be applied to relocate the purple node.}
  \label{figure_shift_subtrees_example}
\end{figure}

\begin{proof}
Let $\sigma$ be some permutation and fix the coloring of the leaves according to it. We describe a way to color the trie according to $\sigma$, such that its number of conflicts $n$ satisfies $n \le (\lambda(P)-\frac{k}{2}) \cdot (k-1)$. By Theorem~\ref{theorem_segments_optimality_algorithm} we have $N(P,\sigma) \le n$, and by also substituting $\lambda(P) \le N(P,\sigma^*)$, the claim follows up to the additive term, which can be refined. In the rest of the proof, we describe how to achieve this coloring. See Figure~\ref{figure_shift_subtrees_example} for a concrete example of the arugments below.

As a first step, we initiate a coloring with $\lambda(P)$ conflicts by running Bit Matcher, or Niagara, or choosing any optimal coloring that ignores the segmentation requirement. By definition this coloring has the minimum number of conflicts possible for $P$, but its problem is that the leaves may not be colored in consecutive segments. To solve this problem, we work top-down from the root, and at every step re-order the subtrees to maintain the order required by $\sigma$. Initially, the root is ordered because it has a single node. When we go down, nodes with colors out of order may appear, and we will move around subtrees to fix the order. Note the following important facts:
\begin{enumerate}[label=(\arabic*),noitemsep]
    \item \label{fact_whole_subtrees} Moving a whole subtree doesn't affect the conflicts inside it. Therefore, moving a subtree rooted at a node $v$ may produce at most one additional conflict, in $v$.
    \item \label{fact_inside_parent} If $v$ is already in conflict, and we swap it with $u$ such that $u$ is of the same color of $v$'s parent, then no new conflict is created.
    \item \label{fact_conflicts_swap} If both $v$ and $u$ are already in conflict and in the same level, we don't increase the number of conflicts by swapping them (with their subtrees). Since the initial coloring had a minimum number of conflicts, we also don't save a conflict by such exchange.
\end{enumerate}

As long as we expand and see no conflict, we simply continue. At some point we may encounter a conflict, and we need to make sure that the color that appeared is in the right place. First, consider the case of a single conflict at the whole new level. Then we can shift the stray node $v$ to its place by moving at most $k$ nodes: $v$ itself, and the end-most node of each segment of color until $v$ reaches its desired location (it doesn't matter if it merges with an existing segment of the same color, or initiates this segment). The first shift of $v$ to the end of the segment of its parent color doesn't create a new conflict (by fact~\ref{fact_inside_parent}, see also Figure~\ref{figure_shift_subtrees_example_1}), and then it might be shifted over at most $k-2$ other segments, overall creating at most $k-2$ additional conflicts. Therefore, overall we may have at most $k-1$ conflicts compared to $1$ original conflict, See Figure~\ref{figure_shift_subtrees_example}.

Clearly, there could be several conflicts in the same level, say $m$. In this case, we need to show that moving them to the correct location doesn't result in more than $m(k-1)$ conflicts in the modified coloring, in order to maintain the $k-1$ ratio. This is not immediately trivial because the shifting argument of the previous paragraph assumes that no more than $k-1$ segments participate in the shift, but the new conflicting nodes are themselves segments of length $1$, so we need to be careful. However, by fact~\ref{fact_conflicts_swap}, shifting one conflict over another doesn't produce a new conflict (see Figure~\ref{figure_shift_subtrees_example_3}). Therefore the argument remains valid: fixing a specific conflict by shifting it to its correct location adds at most $k-2$ conflicts, and hence overall we get at most $m(k-1)$ conflicts in this level in the modified coloring, compared to $m$ in the original coloring (at the same level).\footnote{In fact, since exchanging two conflicting nodes can't add a new conflict, if there is more than one new conflict in the level, we can choose the order of the conflicts as we like, and it may even help us reduce the number of conflicts that result due to shifts. For example, in Figure~\ref{figure_shift_subtrees_example}, if we would swap the green and purple nodes before shifting green, then the green shifting won't generate a red conflict. Therefore, we can only benefit from multiple conflicts in the same level, thanks to the degrees of freedom of ordering them in the most beneficial way, which would be according to their relative order in $\sigma$.}

By moving subtrees on the colored trie, fixing the levels top-down, we eventually reach a final coloring where the colors are in consecutive segments, according to the parts of $P$ ordered by $\sigma$. Since each original conflict might have inflated to at most $k-1$ conflicts, we get that $n \le \lambda(P) \cdot (k-1)$.

A more careful analysis reveals that we can save some, for instance, the root's conflict is ordered so it doesn't inflate at all. Similarly, moving around the second conflict also doesn't add new conflicts. The upper bound of added conflicts increases only when more segments exist, so the worst case is when the segments are added in different levels, such that fixing the order of the $i$-th conflict for $1 \le i \le k$ adds at most $\max(0,i-2)$ conflicts in the new coloring. Only for $i \ge k$ the shifting cost stabilizes on up-to $k-2$ additional conflicts. So we get:
$
n = \lambda(P) + ADDED
\le \lambda(P) + \sum_{i=1}^{k}{\max(0,i-2)} + \sum_{i=k+1}^{\lambda(P)}{(k-2)}
= \lambda(P) + \frac{(k-2)(k-1)}{2} + (\lambda(P)-k)\cdot (k-2)
= \lambda(P) \cdot (k-1) - \frac{(k-2)(k+1)}{2}
$.
\end{proof}

\begin{remark}
\label{remark_organization_of_bitmatcher_coloring_fixed_perm}
Observe that for $k=1$, we get from Lemma~\ref{lemma_any_order_k_approximation} that $n \le 1$, which is indeed the case when the tree is monochromatic. Moreover, for $k=2$ we get that $n \le \lambda(P)$. Since $\lambda(P) \le n$ as well, we conclude that $n = \lambda(P)$, which provides an alternative proof to Theorem~\ref{theorem_consecutive_k2}.
\end{remark}

If we consider a random permutation, we can improve Lemma~\ref{lemma_any_order_k_approximation} by a constant factor.

\begin{lemma}
\label{lemma_randomized_order_k_approximation}
Let $P$ be a partition of $2^W$ into $k$ parts and let $\sigma^* \in S_k$ denote a permutation that minimizes $N(P,\sigma)$. Then $\mathbb{E}[N(P,\sigma)] \le N(P,\sigma^*) \cdot \frac{k+1}{3} - \frac{(k-2)(k+1)}{6}$ where the expectation is over $\sigma$ chosen uniformly from $S_k$.
\end{lemma}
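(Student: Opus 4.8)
The plan is to refine the deterministic bound of Lemma~\ref{lemma_any_order_k_approximation} by averaging over a uniformly random permutation $\sigma$, exploiting the degrees of freedom noted in the footnote of that proof: when several new conflicts appear at the same level, we may process them in any order, and in particular we may process them in the order they appear in $\sigma$, in which case shifting one conflict never crosses another conflicting node of a color that ``should'' be on the far side. First I would set up the same top-down subtree-reordering procedure as in Lemma~\ref{lemma_any_order_k_approximation}, starting from an optimal (non-segmented) coloring with $\lambda(P)$ conflicts produced by Bit Matcher, and track the number of \emph{added} conflicts as each of the $k$ colors' ``first appearance'' is relocated to its correct block. The key observation is that when the $i$-th distinct color appears (in top-down/appearance order, $1 \le i \le k$) and we shift its stray subtree into place, the number of \emph{other} colour-segments it must cross is not $i-2$ deterministically, but depends on how many of the already-placed $i-1$ colours lie on the side toward which it must travel — and this is governed by the relative order of these colours under $\sigma$.

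Next I would compute the expectation. For the $i$-th appearing colour, condition on which colour it is and on the set of $i-1$ colours already placed; by symmetry of $\sigma$, the rank of the new colour among these $i$ colours (in $\sigma$-order) is uniform on $\{1,\dots,i\}$, so the stray subtree must cross on average roughly $(i-1)/2$ of them rather than $i-1$. Accounting for the two ``free'' shifts (the first shift into the parent's colour block costs nothing by fact~\ref{fact_inside_parent}, and crossing an already-conflicting node costs nothing by fact~\ref{fact_conflicts_swap}), the expected number of added conflicts when fixing the $i$-th colour is at most $\max(0, (i-1)/2 - \tfrac12) = \max(0,(i-2)/2)$-ish — more precisely I expect the per-colour expected cost to work out to $\frac{i-1}{3}$ after a careful count, matching the $\frac{k+1}{3}$ leading factor. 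Then, as in Lemma~\ref{lemma_any_order_k_approximation}, the first $k$ conflicts (one per colour) incur these diminishing expected costs while the remaining $\lambda(P)-k$ conflicts each incur at most the stabilized expected cost; summing $\expect{N(P,\sigma)} \le \lambda(P) + \sum_{i=1}^{k}\tfrac{i-1}{3} + (\lambda(P)-k)\cdot\tfrac{k-2}{?}$ and simplifying should give $\lambda(P)\cdot\frac{k+1}{3} - \frac{(k-2)(k+1)}{6}$, after substituting $\lambda(P) \le N(P,\sigma^*)$.

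The main obstacle will be pinning down the exact per-shift expected crossing count so that the constant comes out to be exactly $\frac{k+1}{3}$ rather than $\frac{k+1}{4}$ or $\frac{k+1}{2}$: the subtlety is that the $i-1$ already-placed colours are \emph{not} placed in $\sigma$-order during the algorithm (they were placed in appearance order at earlier levels, then themselves shifted), so I must argue that at the moment colour $i$ is relocated, the already-placed segments \emph{are} in $\sigma$-order within the current level — which follows because the procedure restores $\sigma$-order level by level — and then the uniform-rank argument applies cleanly. A secondary subtlety is handling the interaction of multiple new conflicts at one level: I would invoke fact~\ref{fact_conflicts_swap} to argue that processing them in $\sigma$-order means a stray subtree only ever crosses \emph{placed} segments (never other stray ones in the wrong direction), so the single-conflict analysis composes additively in expectation. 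Both points are essentially bookkeeping on top of the structure already established in Lemma~\ref{lemma_any_order_k_approximation}, so I expect no genuinely new idea is needed beyond the symmetrization.
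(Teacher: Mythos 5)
Your proposal has the right skeleton --- reuse the shifting procedure of Lemma~\ref{lemma_any_order_k_approximation}, take the expectation over a uniform $\sigma$, exploit the two free moves (into the parent's block, and across already-conflicting nodes), and sum by linearity with the diminishing effective $k$ for the first conflicts --- and this is indeed how the paper proceeds. However, the one step that actually produces the constant $\frac{k+1}{3}$ is missing, and the substitute you give does not yield it. You randomize only over the rank of the newly appearing color among the $i$ placed colors (uniform on $\{1,\dots,i\}$), which makes the expected number of crossed segments about $(i-1)/2$; carried through the summation this gives a factor of order $k/2$, not $\frac{k+1}{3}$, and you explicitly defer the reduction from $1/2$ to $1/3$ to ``a careful count'' that is never supplied. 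That count is precisely the heart of the paper's proof: for each conflict node $v$, the cost of relocating it is at most $|i-j|-1$, where $i$ and $j$ are the positions \emph{in $\sigma$} of $v$'s own color and of its parent's color --- both endpoints are random, not just one --- and for $\sigma$ uniform these form a uniformly random ordered pair of distinct indices in $[k]$, so
$\mathbb{E}_\sigma[X_v] \le \mathbb{E}_\sigma[|i-j|]-1 = \frac{1}{k(k-1)}\sum_{i\ne j}|i-j| - 1 = \frac{k+1}{3}-1=\frac{k-2}{3}$.
Summing $\lambda(P) + \sum_{i=2}^{k}\frac{i-2}{3} + (\lambda(P)-k)\frac{k-2}{3}$ and using $\lambda(P)\le N(P,\sigma^*)$ gives exactly the claimed bound. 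Without this two-sided randomization (start point at the parent's color block, destination at the node's color block) the $1/3$ cannot be recovered, so as written the argument proves only a weaker $\approx\frac{k}{2}$ factor.

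A secondary issue: your framing in terms of ``the $i$-th appearing color'' and conditioning on the already-placed colors being in $\sigma$-order adds complications the paper avoids. The paper's per-conflict bound $X_v \le |i-j|-1$ depends only on the two colors involved in the conflict (which are determined by the fixed initial Bit Matcher coloring) and on $\sigma$, not on the processing order within a level, so linearity of expectation applies directly; the only refinement needed is the same one as in Lemma~\ref{lemma_any_order_k_approximation}, namely that the first few conflicts see a smaller effective $k$ (the very first adds nothing), which produces the additive term $-\frac{(k-2)(k+1)}{6}$. Your worry about whether placed segments are in $\sigma$-order at the moment of relocation is legitimate bookkeeping, but it is resolved by the level-by-level invariant already established in Lemma~\ref{lemma_any_order_k_approximation} and by fact~\ref{fact_conflicts_swap}; the genuine gap is the expectation computation above, not this.
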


\begin{proof}
The proof relies on the same arguments as in Lemma~\ref{lemma_any_order_k_approximation}. The difference is that since the order of the parts is chosen at random, the expectation of the number of conflicts that are added for each re-ordering of subtrees is bounded as follows. For each conflicting node $v$ define by $X_v$ a random variable that tells how many conflicts are added as a result of shifting $v$ to its correct location according to a given permutation $\sigma$. If $v$ was generated as a child of the $j$-th segment, and should be shifted to the $i$-th segment ($1 \le i,j \le k$, $i \ne j$), then at most $|i-j| - 1$ new conflict will be generated. We subtract $1$ because moving $v$ to the edge of its parent's color segment doesn't add a conflict. Therefore, if there are $k$ segments, for $\sigma$ chosen uniformly over $S_k$ the locations of $i$ and $j$ are uniform over the $k(k-1)$ choices of pairs $i \ne j$, so:
$$\mathbb{E}_\sigma[X_v] \le \mathbb{E}_\sigma[|i-j| - 1] = \mathbb{E}_\sigma[|i-j|] - 1
$$
$$= \frac{1}{k(k-1)} \sum_{i=1}^{k}{ \sum_{(i\ne)j=1}^{k}{|i-j|} } - 1$$
Focusing on the double summation:
$$
\sum_{i=1}^{k}{ \sum_{(i\ne)j=1}^{k}{|i-j|} }
= \sum_{i=1}^{k}{ \Big ( \sum_{j=1}^{i}{(i-j)} + \sum_{j=i}^{k}{(j-i)} \Big )}
$$
$$
= \sum_{i=1}^{k}{ \Big ( \frac{i(i-1)}{2} + \frac{(k-i)(k-i+1)}{2} \Big ) }
$$
$$
= \frac{k^2 + k}{2} \sum_{i=1}^{k}{1} + \sum_{i=1}^{k}{i^2} - (1+k) \sum_{i=1}^{k}{i}
= \frac{(k-1)k(k+1)}{3}
$$
Therefore:
$$\mathbb{E}_\sigma[X_v] \le \frac{1}{k(k-1)} \cdot \frac{(k-1)k(k+1)}{3} - 1 = \frac{k-2}{3}$$

We note that for the first conflicts, the effective value of $k$ is smaller, as argued in Lemma~\ref{lemma_any_order_k_approximation}. Moreover, the first conflict doesn't have negative expectation, but simply $0$ added conflicts. Finally, while the event of where one conflict is added, or moved to, may affect future conflict events in that subtree, the expectation is linear and we get:
$$
\mathbb{E}_\sigma[n] = \lambda(P) + \mathbb{E}_\sigma[ADDED]
= \lambda(P) + \sum_{v \in conflicts}{\mathbb{E}_\sigma[X_v]}
$$
$$
\le \lambda(P) + \sum_{i=2}^{k}{\frac{i-2}{3}} + \sum_{i=k+1}^{\lambda(P)}{\frac{k-2}{3}}
$$
$$
=   \lambda(P) + \frac{1}{3} \cdot \frac{(k-2)(k-1)}{2} + (\lambda(P)-k) \cdot \frac{k-2}{3}
$$
$$
= \lambda(P) \cdot \frac{k+1}{3} - \frac{(k-2)(k+1)}{6}
$$
\end{proof}

\begin{remark}
\label{remark_organization_of_bitmatcher_coloring_uniform_perm}
Similar to Remark~\ref{remark_organization_of_bitmatcher_coloring_fixed_perm}, we can substitute $k=2$ to get an alternative proof of Theorem~\ref{theorem_consecutive_k2}. In this case, we get that the expectation is at most $\lambda(P)$, but since each individual permutation satisfies $N(P,\sigma) \ge \lambda(P)$, equality follows. For $k=1$, $\lambda(P) = 1$ and there is only one permutation, and indeed $1 \le n \le 1 \cdot \frac{2}{3} - \frac{(-1) \cdot 2}{6} = 1$ as expected.
\end{remark}

\begin{corollary}
\label{corollary_find_decent_permutation}
Let $P$ be a partition of $2^W$ into $k$ parts and let $\sigma^* \in S_k$ denote a permutation that minimizes $N(P,\sigma)$. Then we can find efficiently a permutation $\sigma' \in S_k$ such that: $N(P,\sigma') \le N(P,\sigma^*) \cdot \frac{k+1}{3} - \frac{(k-2)(k+1)}{6}$.
\end{corollary}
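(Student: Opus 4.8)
The plan is to obtain Corollary~\ref{corollary_find_decent_permutation} by derandomizing the uniform choice of $\sigma$ in Lemma~\ref{lemma_randomized_order_k_approximation}. That lemma already gives $\mathbb{E}_{\sigma\sim S_k}[N(P,\sigma)]\le N(P,\sigma^*)\cdot\frac{k+1}{3}-\frac{(k-2)(k+1)}{6}$, so by the probabilistic method some permutation attains the bound; the only thing that remains is to exhibit one in polynomial time. (If one is content with a randomized procedure, there is nothing more to do: sampling $\sigma$ uniformly from $S_k$ already meets the stated bound in expectation. We aim for the stronger deterministic statement.)

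First I would isolate the computational ingredients. For any fixed $\sigma$, Algorithm~\ref{alg_consecutive_ranges} together with Theorem~\ref{theorem_runtime_consecutive_ranges} computes $N(P,\sigma)$ in $O(Wk)$ time, and by Theorem~\ref{theorem_segments_optimality_algorithm} this is the optimal conflict count for the ordering $\sigma$. Crucially, the proof of Lemma~\ref{lemma_randomized_order_k_approximation} does not bound $N(P,\sigma)$ directly but an explicit upper bound $n(\sigma)=\lambda(P)+\sum_{v}X_v(\sigma)\ge N(P,\sigma)$, where $v$ ranges over the (fixed, $\sigma$-independent) set of conflicting nodes of one chosen optimal Bit~Matcher coloring, and each summand $X_v(\sigma)$ is controlled by $|\,\mathrm{pos}_\sigma(c_v)-\mathrm{pos}_\sigma(c'_v)\,|-1$ with $c_v,c'_v$ the (fixed) colors of $v$ and of its parent. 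The point to extract is that $n(\sigma)$ is a sum of terms, each depending on $\sigma$ only through the positions it assigns to $O(1)$ fixed colors.

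Then I would run the method of conditional expectations on $n(\cdot)$: build $\sigma'$ one entry at a time; after committing the first $t$ entries, the conditional expectation $\mathbb{E}[\,n(\sigma)\mid \sigma \text{ extends the current prefix}\,]$ is again a sum over the conflicting nodes, and for each node its conditional expectation is either determined (both relevant colors already placed) or a simple average over the unused positions of one or two colors — computable in $O(k)$ time per node. By the law of total expectation over the uniform choice of the next entry, some choice does not increase this conditional expectation; take it. After $k$ rounds $\sigma'$ is fully specified, the conditional expectation equals $n(\sigma')$, and it is at most the initial value, which Lemma~\ref{lemma_randomized_order_k_approximation} bounds by $N(P,\sigma^*)\cdot\frac{k+1}{3}-\frac{(k-2)(k+1)}{6}$. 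Since $N(P,\sigma')\le n(\sigma')$, the output satisfies the claimed inequality. The procedure performs $O(k^2)$ conditional-expectation evaluations, each of cost $O(Nk)$, so it runs in polynomial time.

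The main obstacle I anticipate is making the decomposition $n(\sigma)=\lambda(P)+\sum_v X_v(\sigma)$ precise enough that each summand provably depends on only $O(1)$ coordinates of $\sigma$: in the proof of Lemma~\ref{lemma_randomized_order_k_approximation} the intermediate segments a stray subtree is shifted across, and even which node plays the role of the "parent segment", are described somewhat informally, so one must pin down that the per-node bound used there is indeed a function of $\mathrm{pos}_\sigma$ of a constant-size set of fixed colors. The argument is fairly robust to this — any $O(1)$ (or even polylogarithmic) dependence still yields efficiently computable conditional expectations — but the write-up must be careful. A secondary, minor point is the "early conflicts" for which Lemma~\ref{lemma_randomized_order_k_approximation} uses a smaller effective value of $k$; this only improves the bound and can be folded into the same bookkeeping or simply discarded.
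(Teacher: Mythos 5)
Your proposal is correct and follows essentially the same route as the paper: both derandomize Lemma~\ref{lemma_randomized_order_k_approximation} by the method of conditional expectations, fixing the permutation one position at a time and exploiting the fact that each per-conflict term $X_v$ of the bound depends on $\sigma$ only through the positions of the (fixed) colors of $v$ and its parent in a chosen optimal coloring. The paper's proof is precisely this greedy conditional-expectation construction, including the case analysis for terms whose colors are already placed, partially placed, or unplaced.
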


\begin{proof}
By Lemma~\ref{lemma_randomized_order_k_approximation} there must be a permutation that satisfies the claim. We find it through the following derandomization process: For a given partition $P$, let $\chi^*$ be the initial coloring with $\lambda(P)$ conflicts, not necessarily in segments. We have that $\mathbb{E}_\sigma[N(P,\sigma)] = \mathbb{E}_{i \in [k]} \big [ \mathbb{E}_{\sigma: \sigma_1 = i}[N(P,\sigma)] \big ] \le \frac{k+1}{3} - \frac{(k-2)(k+1)}{6}$. By computing each of the empiric expectation bounds for fixing $\sigma_1 = i$ (given $\chi^*$), we can fix $\sigma_1 = i_1$ for $i_1$ that minimizes the bound over $\mathbb{E}_{\sigma: \sigma_1 = i_1}[N(P,\sigma)]$.

The way to compute the conditional-expectation is not too different than the (unconditional) expectation. Revisit the proof of Lemma~\ref{lemma_randomized_order_k_approximation}, where we claimed that:
$\mathbb{E}_\sigma[X_v] \le \frac{1}{k(k-1)} \sum_{i=1}^{k}{ \sum_{(i\ne)j=1}^{k}{|i-j|} } - 1$. This expectation is easily computed for every node $v$ in conflict (in $O(k^2)$ time). Now, to condition the expectation on $\sigma_1 = c$ for a specific color $c$, we have two cases: (a) If $v$ or its parent are not colored by $c$, then $\mathbb{E}_\sigma[X_v] \le \frac{1}{(k-1)(k-2)} \sum_{i=2}^{k}{ \sum_{(i\ne)j=2}^{k}{|i-j|} } - 1$ (i.e. we rule-out options with $i=1$ or $j=1$); (b) If $v$ or its parent are colored by $c$, then $\mathbb{E}_\sigma[X_v] \le \frac{1}{k-1} \sum_{i=2}^{k}{|i-1|} - 1$ (i.e. the expectation only depends on the other color). We emphasize that all of the (initial) $\lambda(P)$ conflicts are known, so we can compute the total conditional-expectation by summing over the conditional-expectation of every such $\mathbb{E}_\sigma[X_v]$ depending on its case.

Next, we continue iteratively to fix the values of $\sigma_j$ for $j=2,3,\ldots,k$ conditioned on the values $\sigma_\ell = i_\ell$ for $1 \le \ell < j$. Computing the conditional-expectation of each $X_v$ remains the same, with a third possible case: when the segments of both $v$ and its parent are in known locations, $i^*$ and $j^*$ respectively (according to what we already fixed), then $X_v = |i^* - j^*| - 1$. Since each time we pick the smallest bound over the expectation, the resulting permutation $\sigma' = (i_1,i_2\ldots,i_k)$ satisfies $N(P,\sigma') \le N(P,\sigma^*) \cdot \frac{k+1}{3} - \frac{(k-2)(k+1)}{6}$.
\end{proof}

\begin{theorem}
\label{theorem_approximation_segments_order}
Let $P$ be a partition of $2^W$ into $k$ parts and let $\sigma^* \in S_k$ denote a permutation that minimizes $N(P,\sigma)$. Then for any $\sigma \in S_k$: $N(P,\sigma) \le N(P,\sigma^*) \cdot \min(k-1,W-\floor{\lg k} + 1)$. Also, $\mathbb{E}[N(P,\sigma)] \le N(P,\sigma^*) \cdot \min(\frac{k+1}{3},W-\floor{\lg k} + 1)$ where the expectation is over $\sigma$ chosen uniformly from $S_k$.
\end{theorem}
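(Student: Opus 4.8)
The plan is to obtain Theorem~\ref{theorem_approximation_segments_order} directly as a repackaging of the three approximation lemmas already established, taking in each case the pointwise minimum of the two bounds they provide.

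For the deterministic statement I would proceed as follows. Lemma~\ref{lemma_any_order_k_approximation} gives $N(P,\sigma) \le N(P,\sigma^*)\cdot(k-1) - \frac{(k-2)(k+1)}{2}$; for $k\ge 2$ the subtracted term is nonnegative, so this already yields the clean form $N(P,\sigma)\le N(P,\sigma^*)\cdot(k-1)$ (the degenerate cases $k=1,2$ are covered by Remark~\ref{remark_organization_of_bitmatcher_coloring_fixed_perm}, where in fact $N(P,\sigma)=N(P,\sigma^*)$). In parallel, Lemma~\ref{lemma_any_order_w_approximation} gives $N(P,\sigma) < N(P,\sigma^*)\cdot(W-\floor{\lg k}+1)$, and in particular $\le$. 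Since both inequalities hold simultaneously for the same $\sigma$, we may keep whichever factor is smaller, giving $N(P,\sigma)\le N(P,\sigma^*)\cdot\min(k-1,\ W-\floor{\lg k}+1)$.

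For the expectation statement the key observation is that the $W$-dependent bound of Lemma~\ref{lemma_any_order_w_approximation} holds for every individual permutation, so averaging over a uniform $\sigma\in S_k$ preserves it and yields $\mathbb{E}[N(P,\sigma)]\le N(P,\sigma^*)\cdot(W-\floor{\lg k}+1)$. For the $k$-dependent bound I would use Lemma~\ref{lemma_randomized_order_k_approximation}, which gives $\mathbb{E}[N(P,\sigma)]\le N(P,\sigma^*)\cdot\frac{k+1}{3} - \frac{(k-2)(k+1)}{6}$; discarding the nonnegative ($k\ge 2$) subtracted term leaves $\mathbb{E}[N(P,\sigma)]\le N(P,\sigma^*)\cdot\frac{k+1}{3}$, with the small-$k$ corners handled by Remark~\ref{remark_organization_of_bitmatcher_coloring_uniform_perm}. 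Taking the minimum of the two bounds completes the argument.

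There is no real obstacle here: the theorem is essentially a bookkeeping corollary of Lemmas~\ref{lemma_any_order_w_approximation}, \ref{lemma_any_order_k_approximation}, and \ref{lemma_randomized_order_k_approximation}. The only points needing a touch of care are (i) verifying that the additive correction terms in the two sharper lemmas are nonnegative so they can be dropped to recover the stated (weaker) form, and (ii) noting that the pointwise $W$-bound transfers to expectation by monotonicity of $\mathbb{E}[\cdot]$. I would also add a one-line remark confirming that in the small-$k$ regime the $\min$ collapses to the appropriate quantity and the inequalities become equalities.
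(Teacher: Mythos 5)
Your proposal is correct and matches the paper's own proof, which likewise obtains the theorem by taking the minimum of Lemma~\ref{lemma_any_order_w_approximation} with Lemma~\ref{lemma_any_order_k_approximation} (for the deterministic bound) and with Lemma~\ref{lemma_randomized_order_k_approximation} (for the expectation bound). Your extra bookkeeping --- dropping the nonnegative additive correction terms and noting that the pointwise $W$-bound transfers to the expectation --- is a harmless elaboration of the same argument.
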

\begin{proof}
The first part of the claim follows as the minimum over Lemma~\ref{lemma_any_order_w_approximation} and Lemma~\ref{lemma_any_order_k_approximation}, and the second part follows as the minimum over Lemma~\ref{lemma_any_order_w_approximation} and Lemma~\ref{lemma_randomized_order_k_approximation}.
\end{proof}

Now we provide a simple lower bound of the approximation ratio. It is asymptotically equal to the upper bound of Lemma~\ref{lemma_any_order_k_approximation}, but differs by a constant factor.

\begin{theorem}
\label{theorem_approximation_segments_order_lower_bound_k_or_w}
Let $k$ and $W$ be such that $k-2 \le 2^{W-1}$. There exists a partition $P$ of $2^W$ into $k$ parts, and a permutation $\sigma \in S_k$ such that: $N(P,\sigma) \ge N(P,\sigma^*) \cdot \frac{\floor{\frac{W - \floor{\lg k}}{2}} \cdot (k-1) + 1}{\floor{\frac{W}{2}} + (k-1)}$ where $\sigma^*$ minimizes $N(P,\sigma')$ among $\sigma' \in S_k$.
\end{theorem}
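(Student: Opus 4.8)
The plan is to produce a single partition $P$ of $2^W$ into $k$ positive parts together with one ``bad'' ordering $\sigma$, to bound $N(P,\sigma)$ from below by the numerator, and to bound $N(P,\sigma^*)$ from above by the denominator; for the latter it is enough, since $\sigma^*$ minimises $N(P,\cdot)$, to analyse one carefully chosen ``good'' ordering $\sigma'$ and use $N(P,\sigma^*)\le N(P,\sigma')$. Write $W'=W-\floor{\lg k}$ and $h=\floor{W'/2}$. I would design the part-sizes of $P$ by a system of congruences so that two properties hold simultaneously. First, under $\sigma$ every one of the $k-1$ internal segment boundaries lands at a position whose $2$-adic valuation is exactly $\ell$ for some $\ell$ ranging over $h$ prescribed, pairwise non-consecutive levels, all of them below the depth at which the trie first has fewer than $k$ nodes (so that none of the $k$ colours has been forced to disappear by that depth). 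Second, the same multiset of part sizes admits a ``cancelling'' order $\sigma'$ in which consecutive parts pair up so that every partial sum except one is a multiple of a large power of two. The hypothesis $k-2\le 2^{W-1}$ is exactly what the construction needs in order to fit $k$ strictly positive parts satisfying all these requirements; the congruence system prescribes the residue of each partial sum modulo $2^{2h+1}$ and still leaves enough slack to realise $\sigma'$.

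The upper bound is the easy direction. I would run Algorithm~\ref{alg_consecutive_ranges} on $\sigma'$ and read off the number of conflicts it produces. Every boundary except the single ``uncancelled'' one sits at a position divisible by a power of two large enough that it can create conflicts only inside the top $\floor{\lg k}$ levels of the trie; by the counting used in the proof of Theorem~\ref{theorem_table_size_consecutive_ranges} these contribute at most $k-1$ conflicts in total. The one remaining boundary is a two-part sub-instance whose size I would pick with a binary expansion sparse enough that Algorithm~\ref{alg_consecutive_ranges} generates a conflict on at most every other level there, i.e.\ at most $\floor{W/2}$ further conflicts. Hence $N(P,\sigma^*)\le N(P,\sigma')\le \floor{W/2}+(k-1)$.

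The real work is the lower bound on $N(P,\sigma)$. By Theorem~\ref{theorem_segments_optimality_algorithm}, $N(P,\sigma)$ is the minimum number of conflicts over \emph{all} extensions of the leaf-colouring prescribed by $\sigma$, so I must show that every such extension has at least $h(k-1)+1$ conflicts. I would argue by a level-by-level count in the spirit of the induction in the proof of Theorem~\ref{theorem_consecutive_k2}: fix one of the $h$ prescribed levels $\ell$ and one adjacent colour pair $(c_i,c_{i+1})$ of $\sigma$; because the $i$-th partial sum was chosen to have $2$-adic valuation exactly $\ell$, the colour change between $c_i$ and $c_{i+1}$ cannot have been absorbed monochromatically below level $\ell$, so in any extension it must be realised by a conflict on a trie edge whose depth is determined by $\ell$ and whose horizontal location is determined by $i$. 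Distinct pairs $(\ell,i)$ therefore pin down distinct edges — different $\ell$ lie at different depths, and different $i$ at the same $\ell$ lie at different boundaries — so these $h(k-1)$ forced conflicts are all different, and the always-present root conflict supplies the final $+1$. Dividing the two bounds gives exactly $N(P,\sigma)/N(P,\sigma^*)\ge \bigl(\floor{W'/2}(k-1)+1\bigr)/\bigl(\floor{W/2}+(k-1)\bigr)$. I expect the main obstacle to be precisely this forced-conflict argument together with the no-double-counting claim, since it must hold against \emph{every} extension (not merely Algorithm~\ref{alg_consecutive_ranges}'s output) and must survive the interactions between neighbouring boundaries; a secondary difficulty is checking that the congruence system and the cancelling order $\sigma'$ are simultaneously satisfiable, and that the floor functions line up on the nose rather than only up to additive constants.
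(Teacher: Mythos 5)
Your high-level plan (exhibit one $P$, one bad $\sigma$, one good $\sigma'$; bound the numerator and denominator separately) matches the paper, but the construction you propose cannot deliver the claimed lower bound, and the crux is exactly where you locate it. You prescribe that each of the $k-1$ partial sums has $2$-adic valuation \emph{exactly} $\ell$ for some $\ell$ among $h=\floor{(W-\floor{\lg k})/2}$ prescribed levels, and then count forced conflicts indexed by \emph{all pairs} $(\ell,i)$. But a partial sum has a single valuation, so under your congruence scheme boundary $i$ is aligned at every level below its one valuation $\ell_i$ and can be charged (roughly) one forced conflict, not $h$ of them; the pairs $(\ell,i)$ with $\ell\ne\ell_i$ simply do not exist, and the count collapses from $h(k-1)+1$ to about $k$. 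What is actually needed is that \emph{every} boundary be misaligned at $\Theta(h)$ well-separated levels simultaneously, i.e.\ each partial sum must have an alternating-bit pattern in its low-order $W-\floor{\lg k}$ bits -- a property about the whole binary expansion, not a single valuation. The paper arranges this with a concrete $P$: $p_1+p_2=2^{W-1}$ with $p_1\approx 2^{W-1}/3$ (binary $0101\ldots$) and $p_i$ a power of two at least $2^{(W-2)-\floor{\lg k}}$ for $i\ge 3$; in the bad order $[p_1,p_3,\ldots,p_k,p_2]$ every partial sum equals $p_1$ plus a multiple of a large power of two, so all $k-1$ boundaries inherit $p_1$'s alternating low bits at once, while the good (identity) order cancels $p_1$ against $p_2$ and costs only $\floor{W/2}+(k-1)$.

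Two further points. First, your lower-bound framing ("defeat every extension") is harder than necessary: by Theorem~\ref{theorem_segments_optimality_algorithm}, $N(P,\sigma)$ \emph{equals} the conflict count of Algorithm~\ref{alg_consecutive_ranges} on that order, so it suffices to track the algorithm's run two levels at a time (as the paper does), rather than prove a forced-edge claim against arbitrary extensions -- a claim you assert but do not substantiate, and whose no-double-counting part is nontrivial precisely because segments can be swallowed and boundaries can merge. Second, your persistence condition is justified by the wrong invariant: whether a colour survives up to level $\ell$ is governed by its part size (a part of size $<2^{\ell}$ is gone well before the trie has fewer than $k$ nodes), not by the node count of the level; the paper guarantees persistence by making every part at least $2^{(W-2)-\floor{\lg k}}$, and without an analogous size condition your prescribed congruences ``with slack'' do not ensure that the $k-1$ boundaries keep contributing at all $h$ levels.
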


\begin{proof}
We only need to show a construction of $P$ and $\sigma$. Also note that the requirement $k-2 \le 2^{W-1}$ is not too harsh: in general, it must be that $1 \le k \le 2^W$, and when $k > 2^{W-1}$ then obviously any order of the parts requires at least $k$ conflicts, and no more than $2^W < 2k$ so this case is not interesting.

We construct $P$ as follow: $p_1 + p_2 = 2^{W-1}$, and set $p_i$ for $i \ge 3$ to be either $2^{(W-1) - \floor{\lg k}}$ or $2^{(W-2) - \floor{\lg k}}$ such that the sum over $P$ is $2^W$. For example, if $k=4$ then $p_3 = p_4 = 2^{W-2}$, and if $k=5$ then $p_3 = 2^{W-2}$ and $p_4 = p_5 = 2^{W-3}$. If $W$ is odd we set $p_1 = \floor{\frac{2^{w-1}}{3}}$ and $p_2 = 2p_1 + 1$, and if $W$ is even we set $p_1 = \ceil{\frac{2^{w-1}}{3}}$ and $p_2 = 2p_1 - 1$. One can verify that $p_1 + p_2 = 2^{W-1}$, and that their binary representation is $01...0101$ and $10...1011$ or $01...011$ and $10...101$.

Consider $\sigma^* = \{ 1,2,\ldots,k\}$: since each of $p_i$ for $i \ge 3$ is a completely aligned subtree, then each of these segments causes a single conflict. As for $p_1,p_2$, both are a partition of the left subtree of size $2^{W-1}$, so the number of conflicts due to $p_1,p_2$ would be $\lambda([p_1,p_2]) = \floor{\frac{W}{2}} + 1$. Therefore overall we get $N(P,\sigma^*) = \floor{\frac{W}{2}} + (k-1)$.

Next, consider $\sigma = \{ 1, 3,4, \ldots, k, 2\}$. By running Algorithm~\ref{alg_consecutive_ranges} on the first two (bottom) levels when $P$ is ordered by $\sigma$, we find that $k-2$ conflicts are produced, and the lower bits are effectively truncated. For example, if the input is $[85, 64, 64, 128, 171]$ ($k=5$,$W=9$) then after two steps going up on the trie the segments are $[21, 16, 16, 32, 43]$ (as if still $k=5$ but $W=7$). This pattern continues for as long as none of the parts becomes $0$, which is at least $(W-2) - \floor{\lg k}$ levels, that is at least $\floor{\frac{(W-2) - \floor{\lg k}}{2}}$ pairs of steps. Then we count (at least) an extra conflict for each of the segments ($k$ in total) afterwards, so we get:
$N(P,\sigma) \ge \floor{\frac{(W-2) - \floor{\lg k}}{2}} \cdot (k-1) + k = \floor{\frac{W - \floor{\lg k}}{2}} \cdot (k-1) + 1$. Therefore, we found that:
$$\frac{N(P,\sigma)}{N(P,\sigma^*)} \ge \frac{\floor{\frac{W - \floor{\lg k}}{2}} \cdot (k-1) + 1}{\floor{\frac{W}{2}} + (k-1)}$$
\end{proof}

\begin{corollary}
\label{corollary_tight_approximation_bounds}
The upper and lower bounds of Theorem~\ref{theorem_approximation_segments_order} and Theorem~\ref{theorem_approximation_segments_order_lower_bound_k_or_w} are asymptotically tight (up to constants).
\end{corollary}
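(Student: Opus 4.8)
The plan is to treat the upper bound $U := \min\!\big(k-1,\ W-\floor{\lg k}+1\big)$ from Theorem~\ref{theorem_approximation_segments_order} and the lower bound $L := \frac{\floor{T/2}\,(k-1)+1}{\floor{W/2}+(k-1)}$ from Theorem~\ref{theorem_approximation_segments_order_lower_bound_k_or_w}, where $T:=W-\floor{\lg k}$, as functions of $(k,W)$, and to show that they agree up to an absolute constant, i.e.\ $L=\Theta(U)$. One direction is free: applying the universal bound of Theorem~\ref{theorem_approximation_segments_order} to the witness partition and permutation produced by Theorem~\ref{theorem_approximation_segments_order_lower_bound_k_or_w} gives $L\le U$. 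So the entire task is to prove $L\ge U/c$ for an absolute constant $c$. If $k\le 2$ or $T\le 1$ then $U\le 2$ is bounded and nothing asymptotic is being claimed, so I would assume $k\ge 3$ and $T\ge 2$; then the numerator of $L$ satisfies $\floor{T/2}(k-1)+1\ge \frac{T-1}{2}(k-1)\ge \frac{T}{4}(k-1)$, which is the starting estimate.

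I would then split according to which term of $U$ is the bottleneck. \emph{Case (a): $k-1\le T+1$} (so $U=k-1$). Here $\floor{\lg k}\le\lg(T+2)\le 1+\lg T\le T$ (the last two inequalities hold for $T\ge 2$), hence $W=T+\floor{\lg k}\le 2T$ and $\floor{W/2}\le T$; since also $k-1\le T+1\le 2T$, the denominator of $L$ is at most $3T$, so $L\ge \frac{(T/4)(k-1)}{3T}=\frac{k-1}{12}=\frac{U}{12}$.

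\emph{Case (b): $k-1> T+1$} (so $U=T+1$). The key point is that the denominator of $L$ is then dominated by $k-1$: if instead $k-1<\floor{W/2}\le W/2$, then combined with $k-1> T+1 = W-\floor{\lg k}+1$ we would get $W-\floor{\lg k}+1<W/2$, i.e.\ $\floor{\lg k}>W/2+1$, forcing $k\ge 2^{\floor{\lg k}}\ge 2^{W/2+1}$; but $k\le W/2+1$ too, and $2^{W/2+1}>W/2+1$ for every $W\ge 1$, a contradiction. Hence $k-1\ge\floor{W/2}$, the denominator is at most $2(k-1)$, and $L\ge \frac{\floor{T/2}(k-1)}{2(k-1)}=\frac{\floor{T/2}}{2}\ge\frac{T-1}{4}\ge \frac{T+1}{8}=\frac{U}{8}$ for $T\ge 3$, with $T=2$ checked directly ($L\ge\tfrac12\ge\tfrac{U}{8}=\tfrac38$). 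Combining the cases, $L\ge U/12$ in all regimes, so $U/12\le L\le U$; since $U=\Theta\big(\min(k,\,W-\floor{\lg k})\big)$ whenever $W-\floor{\lg k}\ge 1$, this is exactly the claimed asymptotic tightness, and Corollary~\ref{corollary_tight_approximation_bounds} follows.

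The floor manipulations and the elementary inequality $1+\lg T\le T$ are routine. The one step that requires genuine care, and which I expect to be the main (minor) obstacle, is ruling out in Case (b) the a priori possibility that the $\floor{W/2}$ summand in the denominator of $L$ swamps its numerator: one must observe that $\floor{W/2}$ being comparable to $k$ would force $k$ to grow (super)exponentially in $W$, which is simply incompatible with being in the $W$-bottleneck regime $k-1>W-\floor{\lg k}+1$. Everything else is bookkeeping with floors.
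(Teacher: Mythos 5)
Your proof is correct and takes essentially the same route as the paper's: a regime-by-regime comparison of the closed-form lower bound against $U=\min(k-1,\,W-\floor{\lg k}+1)$, with the degenerate regime ($k\le 2$ or $W-\floor{\lg k}\le 1$, which covers $k>2^{W-1}$) dismissed because there $U=O(1)$ and the ratio is trivially $\ge 1$; your version is simply more explicit about the constants. The only point worth adding is that in your main cases $T\ge 2$ forces $k<2^{W-1}$, so the hypothesis $k-2\le 2^{W-1}$ of Theorem~\ref{theorem_approximation_segments_order_lower_bound_k_or_w} holds and the witness partition you compare against indeed exists.
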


\begin{proof}
Let $P$ be a partition with $k$ parts whose sum is $2^W$. Denote the ratio $\rho \equiv \frac{N(P,\sigma^*)}{N(P,\sigma)}$ for short. If $k \ge 2^{W-1}$, since $N(P,\sigma^*) \ge k$ and $N(P,\sigma) \le 2^W$ we get that $\rho = O(1)$. Otherwise, by Theorem~\ref{theorem_approximation_segments_order}, $\rho = O(\min(k,W - \lg k))$ and by Theorem~\ref{theorem_approximation_segments_order_lower_bound_k_or_w}, $\rho \ge \frac{\floor{\frac{W-\floor{\lg k}}{2}} \cdot (k-1) + 1}{\floor{\frac{W}{2}} + (k-1)}$:
\begin{enumerate}[label=(\arabic*),noitemsep]
    \item If $k = O(W)$: $\rho = O(k)$ and $\rho = \Omega(k)$.
    \item If $W = O(k)$: $\rho = O(W - \lg k)$ and $\rho = \Omega(W - \lg k)$. \qedhere
\end{enumerate}
\end{proof}

Finally, we leave the question of finding (efficiently) an optimal order of the segments, open. That being said, we detail one greedy natural approach that fails, and a few more notes.

The most natural approach is probably the following greedy algorithm: given $k$ parts, we position them from left to right in a greedy order as follows: first we find the segment of length $a$ such that if the partition was $[a,2^W-a]$ we would end up with a minimum number of conflicts. After $a$ is fixed, we find the next segment of length $b$ such that $[a,b,2^W-(a+b)]$ is minimized. This process continues until all segments are positioned, and since we have $O(k)$ runs of Algorithm~\ref{alg_consecutive_ranges} in each step, and total of $O(k)$ steps, this is efficient.

In some sense, this greedy approach is an extreme version of the derandomization process of Corollary~\ref{corollary_find_decent_permutation}, since it doesn't try to re-arrange a coloring, but just figures-out everything from scratch. Unfortunately, the greedy algorithm can fail because there may be partial choices which seem equivalently good, but eventually one is much worst. On the other hand, if instead of sticking to a single option we maintain a list of all the optimal options up to this point, then the algorithm is no longer efficient, because the number of candidates might be exponential. Examples \ref{example_picking_a_loser}-\ref{example_exponential_blowup} below demonstrate these scenarios.

\begin{example}[Picking a loser]
\label{example_picking_a_loser}
Consider the partition $P=[1,\ldots,1,7,\ldots,7]$ for an equal number of $1$s and $7$s. Clearly, if we order the parts alternating between $1$ and $7$, the number of conflict is minimizes and is exactly one per part since $1,7$ compose together ``nicely'' to $8$ (a power of $2$). However, given the greedy choices, we may pick only $1$s at first, and end up with the order $[1,\ldots,1,7,\ldots,7]$, which will result in a sub-optimal number of conflicts.
\end{example}

\begin{example}[Exponential blow-up]
\label{example_exponential_blowup}
Consider the partition $P=[1,2^r-1,2,2^r-2,\ldots,2^{r-1},2^{r-1}]$ for some $r$ such that the sum of $P$ is a power of $2$. The greedy choice in the first step may pick any part of the form $2^i$ (since it adds a single conflict, which is best). In the next step, assume that we pick its counterpart $2^r - 2^i$ (again, it only adds a single conflict so this is greedily optimal). If we try to track all the possible candidates, after $r$ steps have been applied, the list of greedy candidates contains at least $\binom{r}{r/2}$ options, for picking any $\frac{r}{2}$ pairs. This expression is approximately $\frac{2^r}{\sqrt{\pi r}}$ (by Stirling's approximation), and since $r = \frac{k}{2}$ then this number is exponential in $k$.
\end{example}

We remark that while these two examples show the problem with the greedy algorithm, they don't rule out ``practically good'' results for ``practical partitions''. Another natural approach is to attempt to find a good permutation by starting from a Bit Matcher coloring and minimizing the number of added conflicts due to shifting subtrees. This is the exact approach that was used in Lemma~\ref{lemma_any_order_k_approximation} and Corollary~\ref{corollary_find_decent_permutation}, but the latter only guarantees an approximation factor of $O(k)$ which is not very exciting. Perhaps this idea can be used differently, or in combination with others, to determine a permutation that dramatically improves the upper bound.

Other ideas may include: Constructing $\sigma$ ``lazily'' to reduce subtree rearrangements. This method assures that the first conflict of each color doesn't incur additional conflicts due to shifting, but this doesn't help asymptotically.\footnote{It reduces the upper bound from $N(P,\sigma^*) \cdot (k-1) - \frac{(k-2)(k+1)}{2}$, to $N(P,\sigma) \le N(P,\sigma^*) \cdot (k-1) - k(k-2)$ which is negligible.} One may consider to choose $\sigma$ such that the more conflicts a color has the closer it is to the center of $\sigma$, such that the number of shifts required to move these conflicts into place is reduced on average. A more sophisticated approach would be to consider the statistics over pairs of colors in the conflicts (parent-child colors), in order to put colors that conflict most often closer together.

\section{Conclusions}
\label{section_conclusions}

In this paper we studied the trade-off between having a smallest LPM TCAM representation for a given partition, versus the fragmentation of this representation to consecutive segments of addresses. More concretely, we showed that in general not all partitions of $k$ parts can be realized with the minimum number of LPM TCAM rules such that every part is allocated as a single consecutive range of addresses. We then showed that this generalizes such that for any maximum fragmentation $m$, one must either fragment the domain of some of the targets to more than $m$ segments, or use more than the minimum number of rules that are required to realize the partition.

Then, we proceeded to study the case where a partition must be realized in a single segment per target, and presented a scheme to construct a set of TCAM rules that is at most $\min(\frac{k+1}{3},W-\floor{\lg k} + 1)$ times larger than the minimal representation (given a best ordering of the segment).

It would be interesting to find and prove better approximations, or have some guarantees on the resulting TCAM size when each target is mapped to a fixed number of segments (the simplest case-study is when there is a single segment per target). Conversely, if finding the best ordering of segments is hard, it would be interesting to find a hardness reduction, or even a direct proof, to show that.

\bibliographystyle{IEEEtran}
\bibliography{reference}

\end{document}